\documentclass{cccg23} 
\usepackage[T1]{fontenc}
\usepackage[utf8]{inputenc}

\usepackage{amsfonts,amsmath,amssymb, complexity}
\usepackage{cite,enumitem,microtype,graphicx}
\newtheorem{observation}[theorem]{Observation}
\newtheorem{corollary}[theorem]{Corollary}

\usepackage{todonotes}

\theoremstyle{definition}
\newtheorem{definition}[theorem]{Definition}

\title{Geometric Graphs with Unbounded Flip-Width\thanks{Research initiated at the 10th Workshop on Geometry and Graphs, Feb. 3–10, 2023, Bellairs Research Institute, Barbados. We thank the workshop participants (especially Jit Bose) and Szymon Toruńczyk for helpful conversations on this work.}}
\author{David Eppstein\thanks{Department of Computer Science, University of California, Irvine. Research supported in part by NSF grant CCF-2212129.} \and Rose McCarty\thanks{Department of Mathematics, Princeton University. Research supported by the NSF under grant DMS-2202961.}}

\date{ }

\begin{document}
\maketitle  

\begin{abstract}
We consider the flip-width of geometric graphs, a notion of graph width recently introduced by Toruńczyk. We prove that many different types of geometric graphs have unbounded flip-width. These include interval graphs, permutation graphs, circle graphs, intersection graphs of axis-aligned line segments or axis-aligned unit squares, unit distance graphs, unit disk graphs, visibility graphs of simple polygons, $\beta$-skeletons, 4-polytopes, rectangle of influence graphs, and 3d Delaunay triangulations.
\end{abstract}

\section{Introduction}

\emph{Flip-width} is a new and very general notion of width in graphs, defined by Szymon Toruńczyk~\cite{Tor-23} using a cops-and-robbers game on graphs, and intended to capture graph structure in a way that allows for efficient parameterized algorithms. It is hoped that testing whether a given graph models a first-order formula in the logic of graphs can be solved efficiently when parameterized by flip-width and formula size, although currently this is known only for more limited classes of graphs~\cite{DreMahSie-23}.

Beyond potential algorithms, another purpose of flip-width is to unify incompatible notions of graph width, including bounded expansion and of twin-width. A graph family has \textit{bounded expansion} if all shallow minors of its graphs are sparse~\cite{NesOss-12}. It has bounded \textit{twin-width} if its graphs can be reduced to one vertex by contracting pairs of vertices so that the subgraph of pairs of contracted vertices with inconsistent adjacencies maintains bounded degree throughout the contraction process~\cite{BonKimTho-JACM-22}. The sparse graph families of bounded flip-width are exactly the families of bounded expansion, and every graph family of bounded twin-width has bounded flip-width~\cite{Tor-23}. It is easy to construct graph families that have bounded flip-width but neither bounded expansion nor bounded twin-width, such as the family of the graphs that are either subcubic or cographs. The subcubic graphs have bounded expansion but unbounded twin-width~\cite{BonGenKimThoWat} while cographs reverse these inclusions. This union is not very natural, though; subcubic graphs and cographs have little in common. Can we find a natural family of graphs with bounded flip-width, but neither bounded twin-width nor bounded expansion?

Natural candidates include geometric graphs, whose vertices come from points or other simple objects in a geometric space, and whose edges are defined by simple geometric relations between these objects. However, planar graphs, and bounded-ply disk intersection graphs in bounded dimensions have bounded expansion~\cite{MilTenThuVav, DvoNor16}, as do sparse intersection graphs of connected subsets of a surface~\cite{Lee17, DvoNor16}. To find the examples we seek, we need non-sparse graphs. Geometric graph theory contains many examples of highly structured but non-sparse graph families. Do any have bounded flip-width?

In this work we provide a negative answer for many standard geometric graphs. We find a class of subgraphs common to these graphs, which we call ``interchanges'' and which provide a winning strategy for a robber in the cops-and-robbers game used to define flip-width. A graph family that includes arbitrarily large interchanges has unbounded flip-width. Using this idea we show that interval graphs, permutation graphs, circle graphs, intersection graphs of simply-intersecting axis-aligned line segments, intersection graphs of axis-aligned unit squares, unit distance graphs, unit disk graphs, visibility graphs of simple polygons, $\beta$-skeletons, rectangle of influence graphs, and the graphs of 4-polytopes all have unbounded flip-width. We provide a different construction showing that the graphs of 3-dimensional Delaunay triangulations have unbounded flip-width. 

For many of these graphs we prove more strongly that the radius-1 flip-width is unbounded and that these graphs are monadically independent, a related concept in the logic of graphs. A similar approach was used by Hliněný, Pokrývka, and Roy~\cite{HliPokRoy19} to prove hardness of first-order model checking on graph classes with a specific type of interchange, which they call the ``consecutive neighbourhood representation property''. Other hardness results, as well as some efficient algorithms, have been obtained for various geometric graphs in~\cite{BonChaKim-IPEC-22, GHKOST13, GHLOORS15, HliPokRoy19}. The proofs of such hardness results typically imply that the flip-width is unbounded, using the following key fact; a transduction of a class of bounded flip-width also has bounded flip-width~\cite{Tor-23}. Beyond extending these results to more graph classes, our approach has the advantages of only using first concepts, and of providing a concrete robber strategy and a specific bound on the radius.


\section{Cops and robbers}

Like treewidth~\cite{SeyTho-JCTB-93} and bounded expansion~\cite{Tor-23}, flip-width can be defined using a certain cops-and-robbers game. The games for treewidth and expansion involve ``cops with helicopters'', chasing a robber on a graph. The cops can occupy a limited number of graph vertices (initially, none); the robber can choose any starting vertex. In each time step, the cops announce where they will move next, the robber moves to escape them on a path through currently-unoccupied vertices, and then the cops fly directly to their new locations. The cops win by landing on the robber's current vertex, and the robber wins by evading the cops indefinitely. The treewidth of a graph is the maximum number of cops that a robber can evade, moving arbitrarily far on each move~\cite{SeyTho-JCTB-93}. A family of graphs has bounded expansion if and only if, for some function $f$, a robber who moves $\le r$ steps per move can be caught by $f(r)$ cops~\cite{Tor-23}.

The same game can be described differently. Instead of occupying a vertex, the cops set up roadblocks on all edges incident to it. On each move, the cops announce which vertices will be blockaded next. Then, the robber moves along un-blockaded edges. Finally, the cops remove their current blockades and put up new blockades at the announced locations. The cops win by leaving the robber at an isolated vertex, unable to move. Flip-width is defined in the same way, but with more powerful cops. Instead of blockading a single vertex, they may ``flip'' any subset of vertices. This complements the subgraph induced by that subset: pairs of adjacent vertices become non-adjacent, and vice versa. Blockading a single vertex, for instance, takes two flips: one flip of the vertex and its neighbors, and one of just the neighbors. The first flip disconnects the given vertex, and the second restores its neighbors' adjacencies. It doesn't matter in which order these two flips (or any set of flips) is performed.

In the flipping game used to define flip-width, at any move, the cops may perform a limited number of flips, initially none. The robber chooses an arbitrary starting vertex. In each move, the cops announce their next set of flips. The robber moves on a path in the current flipped graph, to evade these flips. Then, the cops undo their current flips and perform the flips that they announced. The cops win by leaving the robber at an isolated vertex, unable to move, and the robber wins by avoiding this fate indefinitely. A family of graphs has \textit{bounded flip-width} if, for some function $f$, $f(r)$ flips per move suffice to catch a robber who moves $\le r$ steps per move. Similarly, the \textit{radius-$r$ flip-width} of a graph is the least number of flips required to catch a robber who moves $\le r$ steps. Bounded flip-width implies bounded radius-$r$ flip-width, but not vice versa; for instance, subdivisions of complete graphs have bounded radius-1 flip-width but unbounded flip-width. Conversely, unbounded radius-$r$ flip-width implies unbounded flip-width, but not vice versa.

Because flipping can simulate blockading, graphs of bounded treewidth also have bounded flip-width. However, graphs of unbounded treewidth may have bounded flip-width. For instance, all planar graphs have bounded flip-width but the planar graphs have unbounded treewidth.

\section{Escaping through interchanges}

In the treewidth game, escape strategies for the robber are modeled graph-theoretically by havens, certain functions from subsets of vertices to connected components of the subgraph formed by their removal~\cite{SeyTho-JCTB-93}. In the same spirit, and using terminology following a road network metaphor, we define \emph{interchanges}, structures in a graph which can be used to define an escape strategy for the robber in the flipping game.

\begin{figure}[t]
\centering\includegraphics[width=0.45\columnwidth]{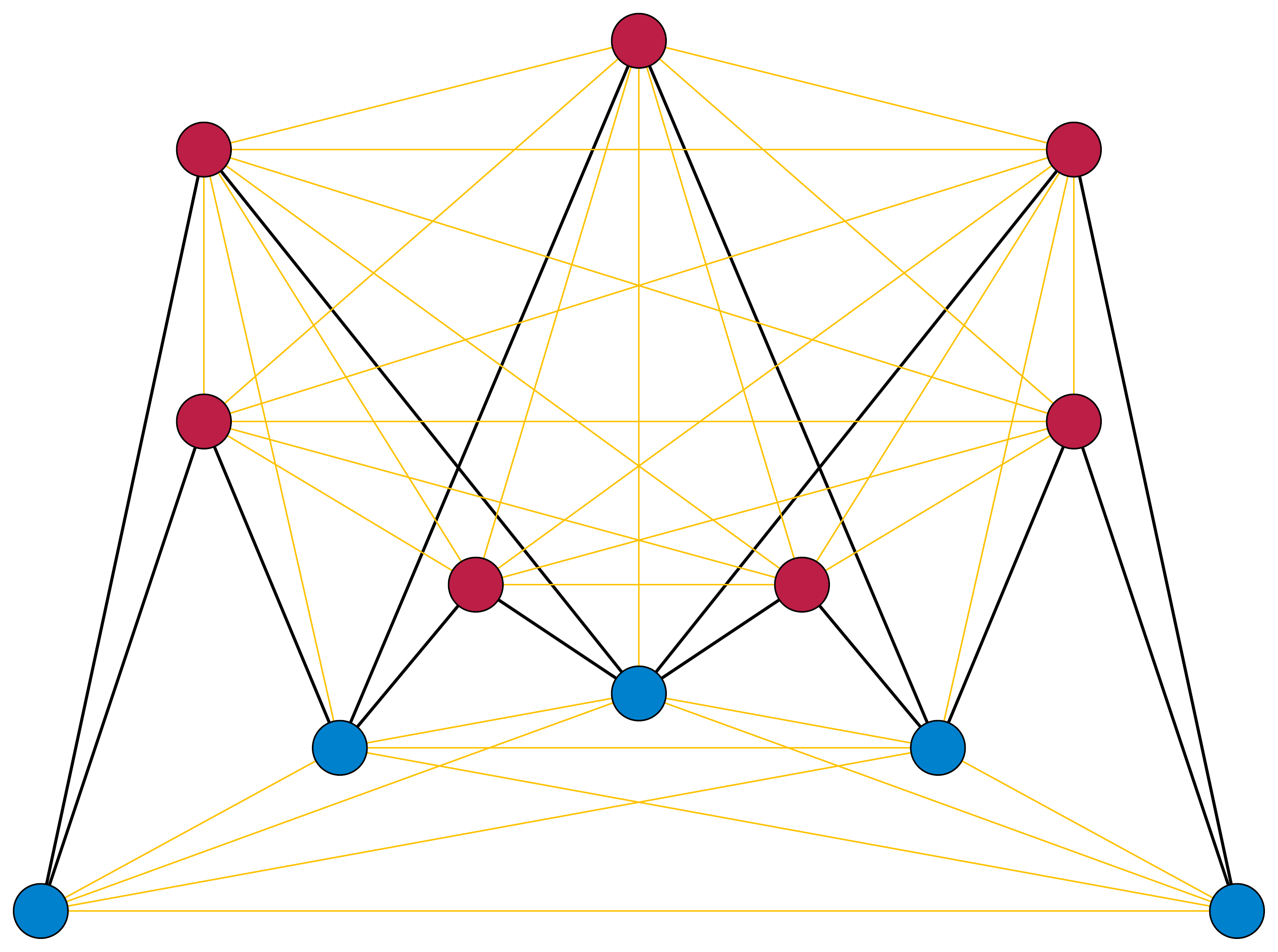}
\caption{An interchange of order five, with lanes in blue and ramps in red. The yellow edges are optional.}
\label{fig:5-interchange}
\end{figure}

\begin{definition}
An interchange of order $n$ consists of:
\begin{itemize}\setlength\itemsep{0em}
\item A linear sequence of $n$ designated vertices, which we call \emph{lanes}.
\item More designated vertices, called \emph{ramps}. Each ramp is associated with two lanes, and each two lanes that are $\le n-3$ steps apart in the sequence have a ramp. (We do not require ramps for farther-apart lanes because they would not be of use to the robber.)
\item An edge between each ramp and its two lanes.
\item Optional edges between any two lanes or between any two ramps. These will be unused by the robber. Making them optional, rather than specifying their presence or absence, allows us to construct geometric realizations without worrying about whether the construction includes these edges.
\item For a ramp that connects lanes $x$ and $y$, optional edges to other lanes between $x$ and $y$ in the sequence. Edges to lanes outside that range are not allowed.
\end{itemize}
\end{definition}

\noindent\cref{fig:5-interchange} depicts an example.

\begin{definition}
Let $\mathcal{F}$ be a collection of flips that could be made in the flipping game (a family of sets of vertices of a given graph). We define two lanes of an interchange to be \textit{equivalent under $\mathcal{F}$} if, for every flip $F$ in $\mathcal{F}$, either both lanes belong to $F$ or both are omitted from $F$.
\end{definition}

\begin{lemma}
\label{lem:triple-to-triple}
Let $a,b,c$ and $d,e,f$ be two disjoint triples of lanes such that, for a collection of flips $\mathcal{F}$, all lanes in $\{a,b,c\}$ are equivalent under $\mathcal{F}$, and all lanes in $\{d,e,f\}$ are equivalent under $\mathcal{F}$. Then, after the flips in $\mathcal{F}$ are made, the flipped interchange contains at least one two-edge lane--ramp--lane path between $\{a,b,c\}$ and $\{d,e,f\}$.
\end{lemma}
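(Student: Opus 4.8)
The plan is to track, for each ramp $R$ and each lane $\ell$, how the flips in $\mathcal{F}$ change the adjacency of $R$ and $\ell$. Performing all the flips toggles the pair $\{R,\ell\}$ once for every $F\in\mathcal{F}$ with $\{R,\ell\}\subseteq F$, so the post-flip adjacency of $R$ and $\ell$ equals the original adjacency plus the parity of $\bigl|\{F\in\mathcal{F}:\{R,\ell\}\subseteq F\}\bigr|$, taken modulo $2$. The first key observation is that equivalence makes this parity uniform across a triple: if $a,b,c$ are equivalent under $\mathcal{F}$, then for every flip $F$ we have $\{R,a\}\subseteq F \iff \{R,b\}\subseteq F \iff \{R,c\}\subseteq F$, since $R\in F$ is a common condition and $a\in F \iff b\in F \iff c\in F$. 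Hence $R$'s adjacencies to $a$, $b$, and $c$ are all toggled, or all left alone.

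Consequently, if $k$ of the lanes $\{a,b,c\}$ are adjacent to $R$ in the original interchange, then after the flips either exactly $k$ or exactly $3-k$ of them are adjacent to $R$. In particular, if $k\in\{1,2\}$ then $R$ is adjacent to at least one lane of $\{a,b,c\}$ no matter what $\mathcal{F}$ does, and the same holds for $\{d,e,f\}$. So it suffices to exhibit a single ramp $R$ whose two endpoints are a lane of $\{a,b,c\}$ and a lane of $\{d,e,f\}$ and which is originally adjacent to exactly one lane of each triple; any post-flip neighbour of $R$ in each triple then yields the desired lane--ramp--lane path.

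To find such a ramp, I would choose a pair $p\in\{a,b,c\}$, $q\in\{d,e,f\}$ minimizing the distance $|p-q|$ in the lane sequence, say with $p$ before $q$. A short counting argument shows this minimum distance is at most $n-3$: it is largest when the two triples occupy the extreme blocks $\{1,2,3\}$ and $\{n-2,n-1,n\}$, where it equals $n-5$. Hence a ramp $R$ joining $p$ and $q$ exists. By minimality of $|p-q|$, no lane of either triple lies strictly between $p$ and $q$, since such a lane would form a strictly closer cross pair. As a ramp's neighbours are confined to the lanes spanned by its two endpoints, $R$ is then adjacent to exactly one lane of $\{a,b,c\}$, namely $p$, and exactly one lane of $\{d,e,f\}$, namely $q$; that is, $k=1$ for both triples, which completes the argument.

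The main obstacle is precisely the possibility that the flips complement an entire triple's worth of adjacencies at a ramp, wiping out all three edges at once; this is exactly the forbidden case $k\in\{0,3\}$. The role of choosing the closest cross pair, together with the structural rule that a ramp connects only to lanes between its two endpoints, is to force $k=1$ and thereby rule this out. I would also verify the degenerate boundary cases of the distance bound, such as ensuring $n$ is large enough that two disjoint triples exist and a connecting ramp is guaranteed, but these are routine.
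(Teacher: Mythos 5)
Your proof is correct and takes essentially the same approach as the paper's: both rest on the observation that, under equivalence, a ramp's adjacencies toggle uniformly across each triple, so a ramp that is originally adjacent to exactly one lane of each triple (because the other triple lanes lie outside the span to which its optional edges are confined) must remain adjacent to at least one lane of each after the flips. The only difference is how the witness ramp is selected---the paper takes the ramp between the middle lanes $b$ and $e$, whose span automatically excludes $a$ and $f$, while you take the ramp of the closest cross pair and use minimality (plus a distance bound) to exclude the other triple lanes---but this choice plays the identical role in both arguments.
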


\begin{proof}
Assume (by swapping the triples if necessary) that the lane $b$ is before the lane $e$. Then these six lanes contain the four-lane subsequence $a,b,e,f$. If ramp $be$ is flipped with respect to the equivalent lanes $\{a,b,c\}$, it becomes adjacent to $a$; otherwise it remains adjacent to $b$. If ramp $be$ is flipped with respect to the equivalent lanes $\{d,e,f\}$ it becomes adjacent to $f$; otherwise it remains adjacent to $e$. In all cases this ramp connects at least one lane in $\{a,b,c\}$ to at least one lane in $\{d,e,f\}$.
\end{proof}

\begin{lemma}
\label{lem:triple-to-many}
Suppose that distinct lanes $a$, $b$, and $c$, in an interchange of order $n$, are equivalent under a collection of flips $\mathcal{F}$.
Then, in the flipped interchange, at least one of $a$, $b$, or $c$ has paths of length two to at least $\frac{1}{3}\bigl(n - 2^{|\mathcal{F}|+1}-3\bigr)$-many other lanes.
\end{lemma}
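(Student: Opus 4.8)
The plan is to reduce \cref{lem:triple-to-many} to repeated applications of \cref{lem:triple-to-triple}. Fix the equivalent source triple $\{a,b,c\}$, discard these three lanes, and sort the remaining $n-3$ lanes into their equivalence classes under $\mathcal{F}$. Since each lane carries the binary vector recording, for each of the $|\mathcal{F}|$ flips, whether the lane lies in that flip, there are at most $2^{|\mathcal{F}|}$ classes. Within each class I would greedily group the lanes into disjoint triples, taken in the linear order of the lane sequence, discarding at most two lanes per class; this wastes at most $2\cdot 2^{|\mathcal{F}|}=2^{|\mathcal{F}|+1}$ lanes in total, so the number of triples formed is at least $\frac13\bigl(n-3-2^{|\mathcal{F}|+1}\bigr)$.

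Each such triple is equivalent under $\mathcal{F}$ and disjoint from $\{a,b,c\}$, so \cref{lem:triple-to-triple} applies and yields, after the flips, a lane--ramp--lane path joining $\{a,b,c\}$ to that triple. Because the triples are pairwise disjoint, the far endpoints of these paths are distinct lanes, so I obtain at least $\frac13\bigl(n-3-2^{|\mathcal{F}|+1}\bigr)$ distinct lanes, each reachable from $\{a,b,c\}$ by a two-edge path. It remains only to pin the near endpoints down to a single one of $a$, $b$, $c$.

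The delicate point, and the main obstacle, is to charge all of these paths to one source lane without losing a factor of three to a crude pigeonhole over $\{a,b,c\}$. Here I would exploit that the flips in $\mathcal{F}$ are fixed, so the source triple has a single, determined flip parity: every ramp is toggled against all of $a,b,c$ simultaneously. Tracing the proof of \cref{lem:triple-to-triple}, the source endpoint produced for a target triple lying to one side of the middle lane $b$ is forced to be either $b$ itself or the outer source lane on that side, and which of these occurs depends only on this common parity. When the parity is even, every path can be routed to the middle lane $b$; when it is odd, the outer source lane on each side lies outside the range of the appropriate connecting ramp, and I would choose those ramps so that they again terminate at a single prescribed lane rather than splitting between the two outer lanes, absorbing the few boundary lanes for which this fails into the $2^{|\mathcal{F}|+1}$ slack. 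Making this routing uniform across all target triples is exactly what upgrades the bound from the naive $\frac19\bigl(n-3-2^{|\mathcal{F}|+1}\bigr)$ to the claimed $\frac13\bigl(n-2^{|\mathcal{F}|+1}-3\bigr)$, and is where I expect the real work to lie.
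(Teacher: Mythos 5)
The proposal has a genuine gap, and it sits exactly at the point you yourself flag as ``the delicate point.'' Your partition into disjoint triples yields only one connected far lane per target triple, i.e.\ $\tfrac{1}{3}\bigl(n-3-2^{|\mathcal{F}|+1}\bigr)$ connected lanes in total, so to reach the stated bound you must route essentially \emph{all} of these paths to a single one of $a$, $b$, $c$; a pigeonhole over $\{a,b,c\}$ at that point gives only $\tfrac{1}{9}\bigl(n-3-2^{|\mathcal{F}|+1}\bigr)$. But the routing argument you sketch is not sound: there is no single ``flip parity of the source triple.'' Whether the ramp $be$ used in \cref{lem:triple-to-triple} ends up adjacent to $b$ or to $a$ after the flips is determined by the parity of the number of flips in $\mathcal{F}$ containing both \emph{that particular ramp} and the class of $\{a,b,c\}$ --- a quantity that depends on the ramp's own memberships in the flips, not only on the source triple. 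Different target triples are served by different ramps, and the cops are free to place those ramps in different flips, so some of your paths terminate at $b$ and others at $a$ or $c$, with no way for you to make this uniform. Your suggestion to ``choose those ramps'' so that they terminate at a prescribed lane does not help: whichever ramp you select, its toggle status is set adversarially, and the only connections \cref{lem:triple-to-triple} guarantees are the ones through the ramp it analyzes.

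The paper avoids this difficulty entirely by applying \cref{lem:triple-to-triple} in the contrapositive rather than to a partition into triples: within each of the $2^{|\mathcal{F}|}$ equivalence classes, at most two lanes other than $a$, $b$, $c$ can fail to have a two-edge path to $\{a,b,c\}$, because any three such failed lanes would themselves form an equivalent triple disjoint from $\{a,b,c\}$, which \cref{lem:triple-to-triple} guarantees is connected to it. Hence all but $2^{|\mathcal{F}|+1}$ of the $n-3$ other lanes --- not merely a third of them --- have two-edge paths to $\{a,b,c\}$, and then the ``crude'' pigeonhole over $a$, $b$, $c$ that you were trying to avoid is applied after all; it is precisely where the factor $\tfrac{1}{3}$ in the statement comes from. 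So the fix is not a cleverer routing scheme but a different accounting: count failures per equivalence class instead of successes per disjoint triple.
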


\begin{proof}
Under the flips in $\mathcal{F}$, there are $2^{|\mathcal{F}|}$ equivalence classes of lanes. By \cref{lem:triple-to-triple}, each equivalence class has at most two lanes (other than $a$, $b$, and $c$) that are not connected by a two-edge path to at least one of $a$, $b$, and $c$, because three disconnected but equivalent lanes would contradict the lemma. The total number of these disconnected vertices is at most $2^{|\mathcal{F}|+1}$; the remaining $n - 2^{|\mathcal{F}|+1}-3$ vertices have two-edge paths to at least one of $a$, $b$, or $c$. Even if each were connected to exactly one of $a$, $b$, or $c$, and even if these connections were evenly distributed between $a$, $b$, and $c$, the statement of the lemma would hold. Multiple connections or uneven distribution of connections only increases the largest of the three numbers of connections among $a$, $b$, and $c$.
\end{proof}

\begin{theorem}
\label{thm:radius-2}
Suppose that cops and a robber play the radius-$2$ flipping game with $t$ flips per move on a graph that includes an interchange of order $n=2^{t+3}+3$. Then the robber can win by moving at each step (including the initial step) to a lane that maximizes the number of lanes that will be reachable after the announced flips.
\end{theorem}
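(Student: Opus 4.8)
The plan is to have the robber maintain the invariant that, at the start of each of its moves, it occupies a lane that is \emph{good} for the flips currently in effect, meaning a lane with length-two (lane--ramp--lane) paths to at least $2^{t+1}$ other lanes in the flipped graph. A good lane has an incident edge (the first edge of such a path) and so is never isolated, so a robber that maintains the invariant can always move and is never caught. It therefore suffices to show two things: that a good lane is available for the robber's opening placement, and that from a good lane the robber can always reach (along a length-$\le 2$ path in the current graph) a lane that is good for the newly announced flips.

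Both parts combine pigeonhole with \cref{lem:triple-to-many}. First I would record the numerology: for any collection $\mathcal F$ of at most $t$ flips, since $2^{|\mathcal F|+1}\le 2^{t+1}$, \cref{lem:triple-to-many} guarantees that three lanes equivalent under $\mathcal F$ force one of them to reach at least $\tfrac13\bigl(n-2^{t+1}-3\bigr)=2^{t+1}$ other lanes after flipping, which is exactly the threshold in the invariant. For the opening move the robber is unconstrained, so I only need a good lane to exist: the $n$ lanes fall into at most $2^{t}$ equivalence classes under the first announced flips, and since $n>2^{t+1}$ some class contains three lanes, to which \cref{lem:triple-to-many} applies.

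For the inductive step, suppose the robber sits on a lane $w$ that is good for the current flips $\mathcal F_0$, and the cops announce a new collection $\mathcal F_1$ of at most $t$ flips. By goodness, the set $R_0$ of lanes reachable from $w$ by length-two paths in the $\mathcal F_0$-flipped graph satisfies $|R_0|\ge 2^{t+1}$. Crucially, the robber may also choose to stay at $w$ (a path of length $0$), so its full set of candidate destinations is $\{w\}\cup R_0$, of size at least $2^{t+1}+1$. These candidates fall into at most $2^{t}$ classes under $\mathcal F_1$, and since $2^{t+1}+1>2\cdot 2^{t}$, pigeonhole yields three candidates lying in a common $\mathcal F_1$-class. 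Applying \cref{lem:triple-to-many} to this equivalent triple shows one of the three is good for $\mathcal F_1$; as all three are reachable, at least one reachable destination attains reachability $\ge 2^{t+1}$ after the flips, so the greedy rule of moving to the reachable lane that maximizes post-flip reachability lands the robber on a good lane and restores the invariant.

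The step I expect to be the crux is precisely this counting: \cref{lem:triple-to-many} needs \emph{three} mutually equivalent reachable lanes, but a good lane only guarantees $2^{t+1}$ reachable lanes, which could split evenly into the $2^{t}$ classes as pairs. The single extra candidate needed to force a triple comes from allowing the robber to remain in place, so the whole argument turns on the small but essential observation that $w$ itself is a legal destination. Once that is in hand, checking that $n=2^{t+3}+3$ is the exact value making both the opening pigeonhole and the iterative pigeonhole succeed is routine.
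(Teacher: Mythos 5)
Your proposal is correct and follows essentially the same route as the paper's proof: the same invariant of reaching at least $2^{t+1}+1$ lanes by paths of length $\le 2$ (counting the current lane itself as a legal destination), the same pigeonhole argument that $2^{t+1}+1$ candidates among at most $2^t$ equivalence classes force an equivalent triple, and the same application of \cref{lem:triple-to-many} with the numerology $\tfrac13\bigl(n-2^{t+1}-3\bigr)=2^{t+1}$. The observation you flag as the crux—that staying put supplies the one extra candidate needed to force a triple—is exactly the paper's phrase ``it can reach $\ge 2^{t+1}+1$ lanes including itself,'' so your writeup is simply a more explicit rendering of the paper's argument.
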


\begin{proof}
In an interchange of this size, by \cref{lem:triple-to-many}, every three equivalent lanes include one connected to
$\ge 2^{t+1}$ other lanes by two-edge paths; it can reach $\ge 2^{t+1}+1$ lanes including itself by paths of length $\le 2$. Among every $2^{t+1}+1$ lanes, at least three are equivalent. By induction, at each step, the robber has a choice of $\ge 2^{t+1}+1$ lanes to move to, among which three are equivalent, and therefore can move to a lane that will continue to reach at least $2^{t+1}+1$ lanes after the announced flips.
\end{proof}

\begin{figure}[t]
\includegraphics[width=\columnwidth]{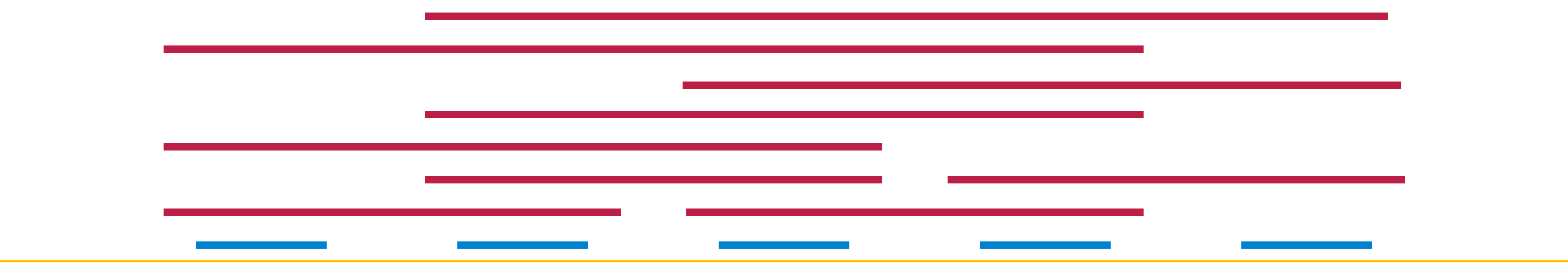}
\caption{Representing an interchange using the intervals of an interval graph or interval containment graph.}
\label{fig:interval-interchange}
\end{figure}

\begin{corollary}
\label{cor:interchange}
A class of graphs that contains arbitrarily large interchanges does not have bounded flip-width.
\end{corollary}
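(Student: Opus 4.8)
The plan is to derive the corollary from \cref{thm:radius-2} by a direct contradiction argument, essentially observing that the theorem already forces the radius-$2$ flip-width to grow without bound across the class. First I would recall the definition: a class $\mathcal{C}$ has bounded flip-width exactly when there is a single function $f$ such that, on every graph of $\mathcal{C}$, a robber moving at most $r$ steps per move can be caught using only $f(r)$ flips per move. So to prove unboundedness it suffices to defeat every candidate function $f$, and in fact it is enough to defeat the single value $f(2)$.

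Then I would suppose for contradiction that $\mathcal{C}$ has bounded flip-width witnessed by some $f$, and set $t := f(2)$. By assumption, $t$ flips per move suffice for the cops to win the radius-$2$ game on every member of $\mathcal{C}$. Since $\mathcal{C}$ contains arbitrarily large interchanges, I would pick a graph $G \in \mathcal{C}$ containing an interchange of order at least $2^{t+3}+3$. A short check against the definition of an interchange shows that restricting to the first $2^{t+3}+3$ lanes and the ramps joining pairs among them yields an interchange of order exactly $2^{t+3}+3$: any two of the retained lanes that are $\le (2^{t+3}+3)-3$ apart are also $\le n-3$ apart in the larger interchange and so already carry a ramp, while the optional edges impose no constraints.

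At this point \cref{thm:radius-2} would apply to this interchange of order $n=2^{t+3}+3$ and guarantee a winning robber strategy against $t$ flips per move in the radius-$2$ game, contradicting the choice of $t=f(2)$. Hence no function $f$ can witness bounded flip-width, and $\mathcal{C}$ has unbounded flip-width; indeed the argument yields the stronger conclusion that its radius-$2$ flip-width is unbounded. I expect the only real obstacle to be the bookkeeping in the previous paragraph, namely confirming that a sufficiently large interchange contains a sub-interchange of the exact prescribed order, together with the care needed to use that bounded flip-width is a statement about a \emph{uniform} $f$ across the whole class, so that defeating the single radius $r=2$ already contradicts it.
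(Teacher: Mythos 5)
Your proof is correct and matches the paper's (implicit) argument: the corollary is stated as an immediate consequence of \cref{thm:radius-2}, exactly via the contradiction you describe, with the uniform-$f$ point and the passage to a sub-interchange of the prescribed order being routine. The bookkeeping step you flag is sound, since a prefix of the lane sequence together with its ramps is itself an interchange.
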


In Appendix~\ref{sec:monadicNIP} we show, more strongly, that graph classes with large interchanges are not monadically dependent, a property that generalizes both classes of bounded flip-width~\cite{Tor-23} and classes that are nowhere-dense~\cite{AdlAdl14}.

\section{Geometric graphs}

The geometric graphs known to have bounded flip-width include the unit interval graphs (which more strongly have bounded twin-width~\cite{BonGenKim-ICALP-01}) and the intersection graphs of disks of bounded ply in any fixed dimension (which more generally have bounded expansion~\cite{MilTenThuVav, DvoNor16}). We prove that many other classes of geometric graphs do not have bounded flip-width, by finding large interchanges in them and applying \cref{cor:interchange}.

\begin{theorem}
\label{thm:interval}
The interval graphs, permutation graphs, circle graphs, and intersection graphs of axis-aligned line segments (no two collinear) have unbounded flip-width.
\end{theorem}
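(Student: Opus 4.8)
The plan is to prove that each of these four graph classes contains arbitrarily large interchanges, and then invoke \cref{cor:interchange} to conclude that they have unbounded flip-width. Since an interchange of order $n$ consists of a linear sequence of $n$ lanes, together with ramps (each joined to two lanes at most $n-3$ apart in the sequence), and since optional edges between pairs of lanes or pairs of ramps may be present or absent freely, the task in each case is to exhibit a geometric realization whose intersection (or other defining) relation produces exactly the required lane--ramp and ramp--lane adjacencies while keeping the ramp-to-lane adjacencies confined to lanes within the allowed range. I would handle interval graphs first, as suggested by \cref{fig:interval-interchange}, and then reduce the remaining classes to it via standard containments among these graph families.

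For interval graphs, the natural construction is to represent the $n$ lanes as $n$ long, pairwise-overlapping intervals stacked so that consecutive lanes in the sequence correspond to consecutive intervals (for instance, lane $i$ an interval placed at horizontal position $i$ in a way that all lanes mutually overlap, which freely realizes the optional lane--lane edges). Each ramp connecting lanes $x$ and $y$ is then a shorter interval positioned to intersect precisely the lanes $x,y$ and possibly some lanes strictly between them, but no lane outside the range $[x,y]$; an interval spanning horizontally from (the relevant part of) lane $x$ to lane $y$ achieves exactly this, since an interval by its connectedness can only meet a contiguous block of the lane intervals. The key design choice is to arrange the lane intervals so that a ramp interval can be made to touch lanes $x$ and $y$ while avoiding lanes outside $[x,y]$; I would place the lanes at nested or staggered heights so that the ``ends'' needed for each ramp are accessible. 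The main thing to verify is that the ramp-to-lane edges never reach a lane before $x$ or after $y$, which follows from the interval (one-dimensional, connected) structure once the realization is set up carefully.

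Having built interval-graph interchanges, I would obtain the other three classes by inclusion or by mild adaptation of the same picture. Interval graphs are permutation graphs, and they are also circle graphs, so arbitrarily large interchanges transfer immediately to both of those classes; I would state the relevant containment and note that the interchange is preserved. For intersection graphs of axis-aligned line segments with no two collinear, I would realize each lane as a long horizontal segment and each ramp as a short vertical segment crossing exactly the lanes in its allowed range (choosing distinct horizontal levels so that no two lanes are collinear, and likewise distinct vertical lines for the ramps); a vertical segment meets a contiguous set of horizontal lanes determined by its vertical extent, which again gives exactly the confinement of ramp edges to the interval $[x,y]$. I expect the main obstacle to be the geometric bookkeeping in this last case: one must simultaneously control the vertical extents of all $\Theta(n^2)$ ramp segments so that each meets only its designated range of lanes, and arrange the horizontal coordinates so that the optional ramp--ramp and lane--lane edges come out consistently (they may be present or absent, so only the forbidden crossings, namely a ramp reaching a lane outside $[x,y]$, must be avoided). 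Once this realization is laid out, each of the four claims follows from \cref{cor:interchange}.
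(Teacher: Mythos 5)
Your overall plan (realize arbitrarily large interchanges, then apply \cref{cor:interchange}) is the paper's plan, but two of your steps contain genuine errors.

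First, your interval construction cannot work as described. You take the $n$ lanes to be long, \emph{pairwise overlapping} intervals; by the Helly property of intervals on a line, they then all share a common point $p$. Any ramp for a pair $\{x,x+1\}$ with $1<x$ and $x+1<n$ must avoid lane $1$ and lane $n$, both of which contain $p$, so the ramp interval avoids $p$ and lies entirely on one side of it. If it lies to the left of $p$, the set of lanes it meets is $\{z:\ell_z\le\beta\}$, where $\ell_z$ is the left endpoint of lane $z$ and $\beta$ is the ramp's right endpoint; this is a prefix of one fixed ordering of the lanes, so forcing it to equal $\{x,x+1\}$ means lanes $x,x+1$ have the two smallest left endpoints. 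Symmetrically, a ramp to the right of $p$ forces $x,x+1$ to have the two largest right endpoints. Among the three pairs $\{2,3\},\{3,4\},\{4,5\}$, which all require ramps once $n\ge 6$, two ramps lie on the same side of $p$, forcing two \emph{different} pairs to be the same two-element set---a contradiction. Your appeal to ``an interval by connectedness meets a contiguous block of lanes'' is valid only when the lanes are disjoint and geometrically ordered, which is the opposite of what you set up (and ``staggered heights'' has no meaning for intervals on a line). The paper's layout is exactly this fix: lanes are \emph{short, pairwise disjoint} intervals placed in sequence order, and each ramp is a \emph{long} interval stretching from lane $x$ to lane $y$; the intermediate lanes it hits give only optional lane--ramp edges. (Your axis-aligned segment construction, by contrast, is essentially correct, and is the paper's construction rotated: there the lanes are separated in the $y$-direction, which is precisely the disjointness your interval picture lacks.)

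Second, the transfer to permutation and circle graphs rests on a false inclusion: interval graphs are \emph{not} all permutation graphs (the classes are incomparable). Permutation graphs are exactly the graphs that are both comparability and co-comparability graphs; interval graphs are always co-comparability but need not be comparability graphs---for example, the intersection graph of all intervals $[i,j]$ with $1\le i<j\le n$ is an interval graph that is not a permutation graph for large $n$, because the canonical interval order has order dimension greater than two. Since you route the circle-graph case through permutation graphs, that claim collapses as well. The paper handles both classes by changing the relation on the \emph{same} interval family: the interval \emph{containment} graph (edge when one interval contains another) is a permutation graph, permutation graphs are circle graphs, and for the interchange family the containment graph differs from the intersection graph only in optional ramp--ramp edges (each long ramp contains the short lanes it spans, and disjoint lanes contain nothing). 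This containment-graph step is essential and is missing from your proposal.
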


\begin{proof}
We construct intervals representing an arbitrarily large interchange, with short disjoint intervals for each lane and long intervals spanning multiple lanes for each ramp (\cref{fig:interval-interchange}). The intersection graph of these intervals is an interval graph forming the interchange, with all optional lane--ramp edges included. The interval containment graph, having a vertex per interval and an edge whenever one interval contains another, differs only in some optional ramp--ramp edges. Interval containment graphs are the same as permutation graphs, and are a subclass of circle graphs~\cite{BraLeSpi-99}. For axis-aligned line segments, lift the ramp intervals to distinct $y$-coordinates, and replace the lane intervals by tall vertical segments.
\end{proof}

\begin{figure}[t]
\centering\includegraphics[width=0.7\columnwidth]{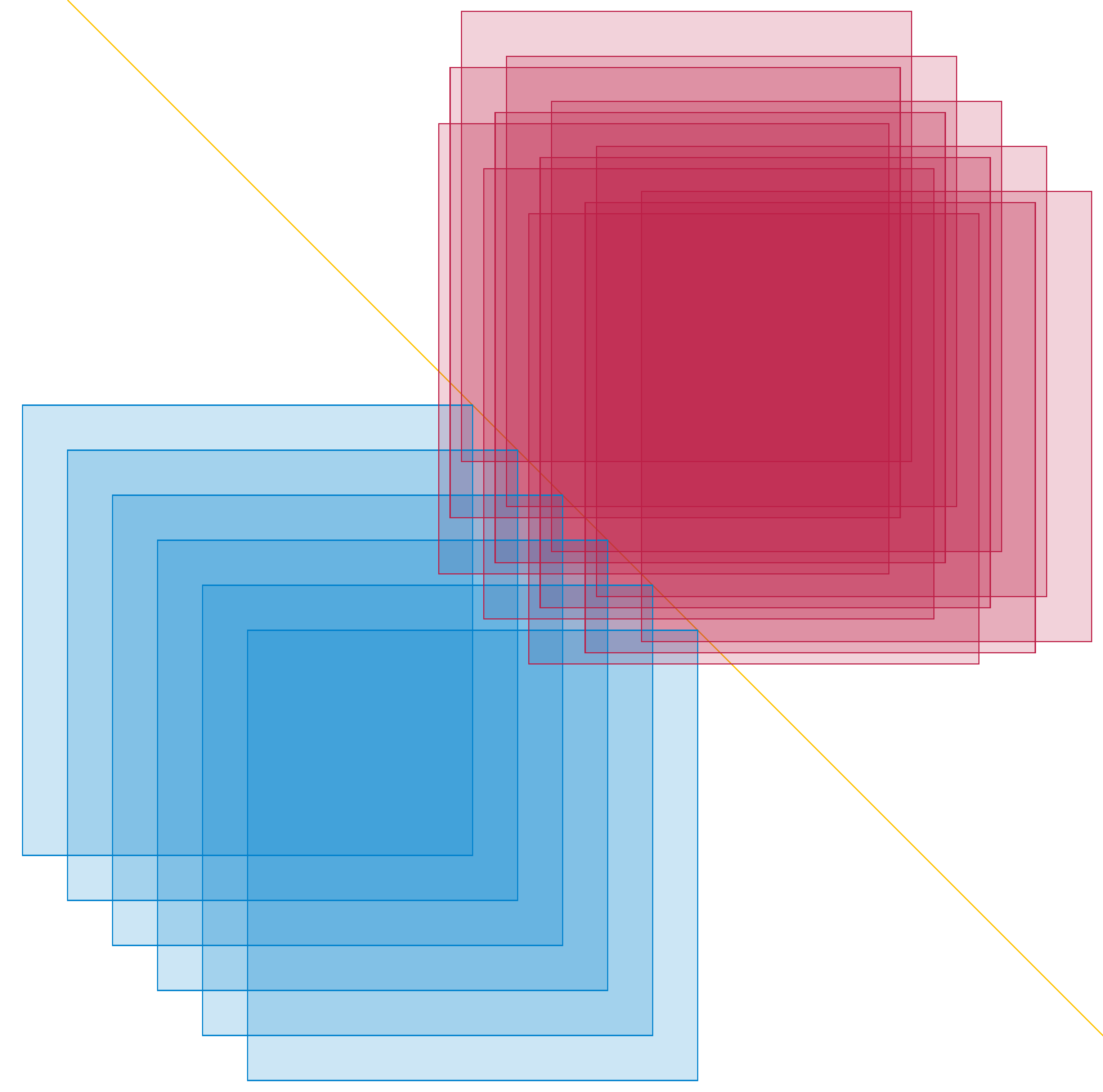}
\caption{Representing an interchange using axis-aligned unit squares.}
\label{fig:square-interchange}
\end{figure}

The graph classes in \cref{thm:interval} are monadically independent~\cite{BonChaKim-IPEC-22}, from which unbounded flip-width follows, but without a bound on the robber's escape radius.

\begin{theorem}
The intersection graphs of axis-aligned unit squares have unbounded flip-width.
\end{theorem}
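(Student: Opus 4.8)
The plan is to construct an arrangement of axis-aligned unit squares whose intersection graph realizes an interchange of arbitrary order $n$, and then invoke \cref{cor:interchange}. The overall strategy mirrors the interval-graph construction from \cref{thm:interval}: I want short, disjoint ``lane'' objects arranged in a linear sequence, and ``ramp'' objects that each span (intersect) exactly a pair of nearby lanes. The main difficulty, compared with intervals, is that a unit square cannot be made arbitrarily long to reach across many lanes at once; every square has the same fixed side length. So I must exploit the second dimension to let a fixed-size square reach two lanes that are close together in the sequence, and to stack ramps for different lane-pairs without creating spurious intersections.

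Concretely, I would first place the $n$ lanes as unit squares arranged in a staircase along a diagonal, with consecutive lanes offset by a small amount $\delta$ in both $x$ and $y$, so that the lanes are pairwise disjoint (by choosing the offset between any two lanes' bounding boxes to exceed $1$ in at least one coordinate for non-adjacent pairs, or more carefully by a diagonal layout where each lane occupies a distinct ``cell''). Since a ramp must connect two lanes that are up to $n-3$ steps apart, and a single unit square centered between two such lanes can only intersect both if they are within unit distance, the key observation is that the definition of an interchange does \emph{not} require the ramp to be a single object-respecting geometric constraint beyond intersecting its two lanes; it requires only the ramp--lane edges and permits (but does not require) edges to lanes in between. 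This gives me latitude: I can route a ramp for a far-apart pair using the freedom in the staircase so that the ramp square intersects precisely its two designated lanes, with its optional intermediate edges either present or absent as the geometry dictates.

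The key steps, in order, are: (1) fix the lane layout as a monotone staircase of unit squares with a small per-step offset, verifying pairwise disjointness of lanes; (2) for each admissible lane pair $(x,y)$, describe a unit square that intersects both lane $x$ and lane $y$ while intersecting no lane outside the closed range between $x$ and $y$ in the sequence (intersections with intermediate lanes being optional and therefore harmless); (3) place ramps for distinct pairs at slightly perturbed positions so that ramp--ramp intersections remain optional and no ramp accidentally reaches a forbidden lane; and (4) conclude that the intersection graph is an interchange of order $n$ and apply \cref{cor:interchange}. A figure (\cref{fig:square-interchange}) illustrates the construction.

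The hard part will be step (2): showing that a single unit square can simultaneously touch two designated lanes that may be far apart in the \emph{sequence} while avoiding every lane \emph{outside} the spanned range. In the interval case this was trivial because intervals can be long; here I expect to resolve it by arranging the lanes along a sufficiently gently-sloped staircase so that ``far apart in the sequence'' still means ``geometrically close,'' letting a fixed-size square bridge the gap, while the monotone ordering guarantees that any lane a ramp square reaches lies within the spanned index range. The optional-edge flexibility in the interchange definition is exactly what makes this feasible, since I never have to argue about which intermediate lanes are hit — only that no lane strictly outside the range is.
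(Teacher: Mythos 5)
Your high-level plan (a diagonal arrangement, exploiting optional edges, then applying \cref{cor:interchange}) points in the right direction, but there is a genuine gap, and it sits exactly where you flag the ``hard part.'' Your step (1) --- lanes realized by \emph{pairwise disjoint} unit squares --- is both unnecessary and impossible to combine with step (2) once $n$ is large. Two closed axis-aligned unit squares are disjoint only if their centers are at $L_\infty$-distance greater than $1$, while a ramp square can intersect lanes $x$ and $y$ only if their centers are within $L_\infty$-distance $2$ of each other (each center is within distance $1$ of the ramp's center). Every pair among lanes $1,\dots,n-2$ is at sequence distance at most $n-3$ and hence requires a ramp, so those $n-2$ centers would have to lie in a $2\times 2$ box while being pairwise more than $1$ apart; splitting the box into four $1\times 1$ cells, the pigeonhole principle allows at most $4$ such lanes, forcing $n\le 6$. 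No staircase, however gently sloped, escapes this; indeed your description is internally inconsistent, since consecutive unit squares offset by a small $\delta$ in both coordinates overlap rather than being disjoint. Consequently step (2), which you defer (``I expect to resolve it by\ldots''), cannot be resolved under your stated constraints, and the ramps are never actually constructed.

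The missing idea is to \emph{abandon} disjointness of lanes: the interchange definition makes all lane--lane edges optional, so overlapping lane squares are harmless. This is precisely what the paper's proof exploits. It takes the interval representation from \cref{thm:interval}, scales the whole construction to length less than $\sqrt2$, and lays it on a diagonal line; since an axis-aligned unit square can meet a $45^\circ$ line in a chord of any length up to $\sqrt2$ (its diagonal), every interval --- including the long ramp intervals --- is realizable as the chord of a single unit square, which is how the paper gets around the fixed side length that you correctly identify as the main obstacle. Placing lane squares below the line and ramp squares above it ensures that a lane square and a ramp square intersect exactly when their chords overlap (two squares on opposite closed sides of the line can only meet on the line itself), so the mandatory and forbidden lane--ramp edges come out exactly right, while any extra lane--lane or ramp--ramp intersections are merely optional edges (\cref{fig:square-interchange}). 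Your staircase could be repaired in the same spirit --- let the lanes overlap along the diagonal and place the ramp for the pair $(x,y)$ at the corner position offset by roughly $+1$ horizontally from lane $x$ and $-1$ vertically from lane $y$ --- but as written, the proof does not go through.
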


\begin{proof}
Place the intervals of \cref{thm:interval} on a diagonal line, scaled to have length less than $\sqrt2$. Represent lanes by squares below this line, intersecting the line in the given interval, and represent ramps by squares above this line (\cref{fig:square-interchange}). The resulting unit square intersection graph may have additional lane--lane and ramp--ramp intersections, but these only create optional edges.
\end{proof}

\begin{figure}[t]
\centering\includegraphics[width=0.9\columnwidth]{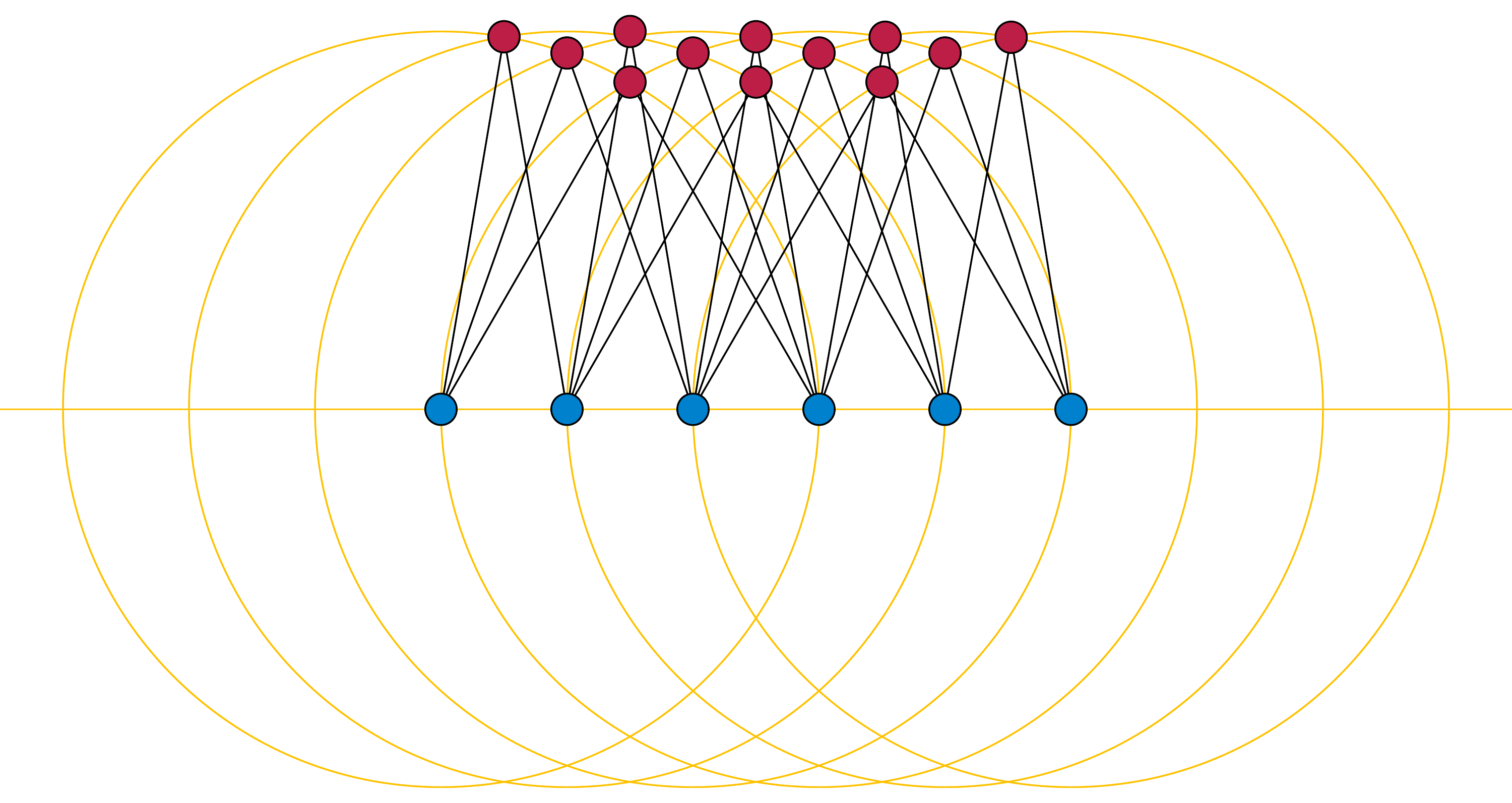}
\caption{Representing an interchange as a unit distance graph or the center points of a unit disk graph.}
\label{fig:unit-distance-interchange}
\end{figure}

\begin{theorem}
The unit distance graphs and unit disk graphs have unbounded flip-width.
\end{theorem}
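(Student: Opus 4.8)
The plan is to build an explicit unit-distance (equivalently, unit-disk-center) realization of an arbitrarily large interchange and then invoke \cref{cor:interchange}. The key difficulty with unit distance graphs, as opposed to the interval and unit-square constructions above, is that the edge relation is an \emph{equality} constraint (distance exactly $1$) rather than an \emph{intersection} or \emph{containment} constraint. Intersection-based models enjoy a monotonicity that lets us freely perturb objects so long as their intended intersections persist; the unit-distance relation has no such slack, so every desired edge must be placed at distance exactly $1$ and, simultaneously, every undesired pair must avoid distance $1$. Since the interchange definition only forbids ramp-to-lane edges outside a ramp's span (and treats all lane--lane, ramp--ramp, and in-span ramp--lane edges as optional), the real constraint is: realize all \emph{required} edges (each ramp to its two endpoint lanes) at distance exactly $1$, and keep every out-of-span ramp--lane pair at distance $\neq 1$, while the optional edges may appear or not.

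First I would lay out the $n$ lanes as points along a gently curved arc, spaced so that a point can be at unit distance from two lanes at chosen positions along the arc; the natural mechanism is to place each ramp $xy$ at (one of the two) intersection points of the unit circles centered at lanes $x$ and $y$, which exists precisely when $\lvert xy\rvert \le 2$. So the first quantitative step is to choose the lane positions so that every pair of lanes meant to have a ramp (those $\le n-3$ apart) lies within distance $2$, while controlling the geometry tightly enough to analyze the spurious distances. Placing lanes on a circular arc of large radius is convenient: the chord length between lanes grows monotonically with their index separation, so ``within distance $2$'' becomes a clean threshold on index gap, and I would scale the arc so this threshold matches the $n-3$ combinatorial bound.

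Second, I would verify the two forbidden-distance conditions. (i) A ramp for lanes $x,y$ must not accidentally sit at distance exactly $1$ from a lane $z$ lying \emph{outside} the interval $[x,y]$; since such a $z$ is strictly farther along the arc than both $x$ and $y$, and the ramp sits on the near side of the chord $xy$, the distance from the ramp to $z$ should be pushed strictly above $1$ by the geometry, so I would prove a monotonicity lemma bounding that distance away from $1$. (ii) Distinct ramps must not be forced to distance exactly $1$ from each other except optionally; since ramp--ramp edges are optional this is not a correctness obstruction, only something to keep track of. For the unit-disk case the same point set works: two unit disks intersect iff their centers are within distance $2$, which is an \emph{inequality}, giving even more freedom, so the disk realization is immediate once the unit-distance realization's center points are in place, modulo re-checking that only optional edges are introduced.

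The main obstacle I expect is the rigidity of the exact-distance-$1$ requirement interacting with the placement of many ramps: two different ramps may legitimately need unit distance to a \emph{shared} lane, and I must ensure the finitely many intersection-point choices can be made so that no out-of-span ramp--lane pair lands at distance exactly $1$ by coincidence. Because the bad set (pairs at distance exactly $1$) is a measure-zero algebraic condition, a genericity/perturbation argument should close this: after fixing the lanes, each ramp has two candidate positions, and I would argue that an arbitrarily small perturbation of the lane positions (keeping all required within-$2$ distances and pushing required pairs to distance exactly $1$ via the defining circle intersections) avoids all finitely many coincidental unit distances. Making this perturbation argument fully rigorous while preserving the \emph{required} unit distances is the delicate part, and I would handle it by treating the ramp coordinates as continuous functions of the lane positions and invoking that the forbidden coincidences form a lower-dimensional set avoided by generic choices. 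The figure referenced as \cref{fig:unit-distance-interchange} would make the arc-and-circle-intersection layout concrete.
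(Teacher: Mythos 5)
Your unit-distance construction is essentially the paper's: the paper places the lanes equally spaced on a straight segment of length less than $2$ and puts each ramp at an intersection point of the two unit circles centered at its two lanes; the strict unimodality of the distance from a ramp to points along the segment then gives, deterministically, distance $<1$ to lanes strictly inside the ramp's span and distance $>1$ to lanes outside it. This also shows that your genericity/perturbation step is unnecessary: the only non-optional requirement is the \emph{absence} of ramp--lane edges outside a ramp's span, and that is already guaranteed by the monotonicity lemma you propose; every other coincidental unit distance (lane--lane, ramp--ramp, or in-span ramp--lane) only produces optional edges, so nothing needs to be perturbed and the ``delicate part'' you worry about does not exist. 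Your circular arc and the threshold-matching at gap $n-3$ are likewise harmless but superfluous, since ramps for far-apart lane pairs are simply not required by the definition.

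The genuine gap is in the unit-disk half. Under the convention you yourself state (unit disks adjacent if and only if their centers are within distance $2$), the \emph{same} point set does not work: a lane $z$ just outside the span of a ramp is at distance only slightly more than $1$ from that ramp, hence well within distance $2$, so the unit disk graph on these centers contains ramp--lane edges that the interchange definition forbids (``Edges to lanes outside that range are not allowed''). The inequality slack that you cite as giving ``even more freedom'' is exactly what hurts you here, and the ``re-check that only optional edges are introduced'' that you defer would fail. The paper's fix is a scaling step that your proposal is missing: dilate the whole point set by a factor of $2$ (or by any factor in the interval $\bigl(2/(1+\delta),2\bigr]$, where $1+\delta$ is the minimum out-of-span ramp--lane distance). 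Then each required edge, now at distance exactly $2$, is realized by tangent disks; in-span optional ramp--lane pairs remain below distance $2$; and every out-of-span ramp--lane pair exceeds distance $2$ and is non-adjacent. With that correction, the unit-disk case follows from \cref{cor:interchange} exactly as the unit-distance case does.
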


\begin{proof}
For unit distance graphs, place points representing the lanes equally spaced along a line segment of length less than two in the plane, and place points representing ramps at the intersections of pairs of unit circles centered at the lane points (\cref{fig:unit-distance-interchange}). The resulting graph may have ramp--ramp or lane--lane edges, but it will have no optional lane--ramp edges. For unit disk graphs, scale the same points by a factor of two so that unit disks centered at them will be tangent when their points are adjacent in the unit-distance graph. The resulting unit disk graph includes all possible optional lane--ramp edges, forming an interchange of the same order.
\end{proof}

\begin{theorem}
\label{thm:polyVis}
The visibility graphs of simple polygons do not have bounded flip-width.
\end{theorem}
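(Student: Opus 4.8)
The plan is to realize an arbitrarily large interchange as a subgraph of the visibility graph of a suitable simple polygon and then invoke \cref{cor:interchange}. The guiding intuition is the same as in \cref{thm:interval}: lanes should be a linear sequence of features, and each ramp should ``see'' exactly the two lanes it is associated with (and optionally lanes between them), but should be blocked by the polygon boundary from seeing lanes outside its span. Visibility graphs are a natural fit for this because occlusion gives us precisely the kind of one-sided range constraint the interchange definition demands: a ramp vertex placed so that a wall occludes everything beyond its two endpoints automatically satisfies the rule that edges to lanes outside the range $[x,y]$ are forbidden, while edges to lanes inside the range are merely optional.

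First I would lay out the $n$ lane vertices as a row of reflex ``teeth'' or pockets along one edge of the polygon, so that consecutive lanes are mutually visible but the pocket geometry lets us control longer-range visibility. Then, for each pair of lanes $(x,y)$ at distance $\le n-3$ in the sequence, I would place a ramp vertex in a position (for instance, a spike or a vertex on the opposite wall) from which the only lanes visible are those in the closed range between $x$ and $y$; the reflex vertices of the teeth to the left of $x$ and to the right of $y$ should block all sightlines to lanes outside that range. The lane--lane and ramp--ramp visibilities, as well as the within-range lane--ramp visibilities, are all optional, so I do not need to control them precisely — this is exactly the flexibility the interchange definition was designed to exploit, and it mirrors how the earlier proofs dismiss ``additional intersections'' as optional edges.

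The main obstacle, and the step I would spend the most care on, is simultaneously packing $\Theta(n^2)$ ramps into a single simple polygon so that each sees the correct range and the polygon boundary remains a simple (non-self-intersecting) closed curve. With a quadratic number of ramps one cannot give each its own deep pocket without the pockets interfering, so I would look for a construction where ramps associated with nested or overlapping ranges share occluders, or where ramps are stacked along a ``staircase'' facing the row of lanes. A clean way to organize this is to think of each ramp's admissible region as an intersection of two half-plane-like visibility wedges determined by the reflex vertices flanking $x$ and $y$; showing these admissible regions are nonempty and that representative points can be chosen without forcing the boundary to self-cross is the technical heart of the argument. It may be cleanest to first build the desired visibility structure as an abstract arrangement and then certify that it is realizable by an explicit coordinate assignment, verifying along the way that every forbidden lane--ramp sightline is genuinely blocked by at least one boundary edge and that no two boundary edges cross. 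Once such a polygon is exhibited for every $n$, the visibility graphs contain interchanges of unbounded order, and \cref{cor:interchange} yields that this class does not have bounded flip-width.
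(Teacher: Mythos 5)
Your high-level strategy is the same as the paper's: realize an arbitrarily large interchange inside a polygon visibility graph, use occlusion to enforce the forbidden out-of-range lane--ramp edges, lean on the optionality of all other edges, and finish with \cref{cor:interchange}. However, there is a genuine gap: you never actually construct the polygon. You correctly identify that the difficulty is packing $\Theta(n^2)$ ramps into one simple polygon so that each ramp sees exactly its range, you call this ``the technical heart of the argument,'' and then you leave it as a promissory note (``look for a construction where ramps \dots share occluders,'' ``it may be cleanest to first build the desired visibility structure as an abstract arrangement and then certify that it is realizable''). Nothing in the proposal verifies that the admissible regions for ramps are simultaneously nonempty, that ramps do not occlude or expose one another, or that the resulting boundary is a simple closed curve --- and that verification is precisely the content of the theorem. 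A reviewer could not reconstruct a proof from what is written.

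For comparison, the paper closes this gap with a short explicit construction that avoids pockets and staircases entirely: put the lanes on a horizontal line, the ramps joining consecutive lanes on a line above it, and all remaining ramps on a line below it; form the triangle joining each ramp to its two lanes; take the union of all triangles and fill any holes, keeping the outer boundary as the polygon. The key point is that each ramp is the apex of its triangle and the only point of its horizontal line in any triangle on that side, so after taking the union it remains a vertex of the outer boundary with portions of its two triangle sides adjoining it. Those two edges confine every sightline from the ramp to the triangle's cone, whose intersection with the lane line is exactly the span between its two lanes: out-of-range lanes are invisible by construction, the two endpoint lanes are visible along the triangle sides, and any in-between lanes it happens to see are optional edges. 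No case analysis over $\Theta(n^2)$ ramps and no separate realizability argument is needed --- the union itself is the coordinate assignment you were hoping to certify.
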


\begin{proof}
Place points representing lanes on a horizontal line, and place points representing the ramps between two consecutive lanes in the same order on a parallel line above them. Place points representing the remaining ramps, between non-consecutive lanes, on a third parallel line below the lanes. Draw a triangle between each ramp vertex and the two lanes it should connect, and take the union of the triangles. Fill any holes formed in taking the union, keeping only the outer boundary, to form a simple polygon (\cref{fig:visibility-interchange}). In the resulting polygon, each ramp still has parts of two triangle sides adjoining it, blocking its visibility from any lanes that it should not see. Within the triangle for each ramp, it can see its two lanes and any other lane between them.
\end{proof}

\begin{figure}[t]
\centering\includegraphics[width=0.8\columnwidth]{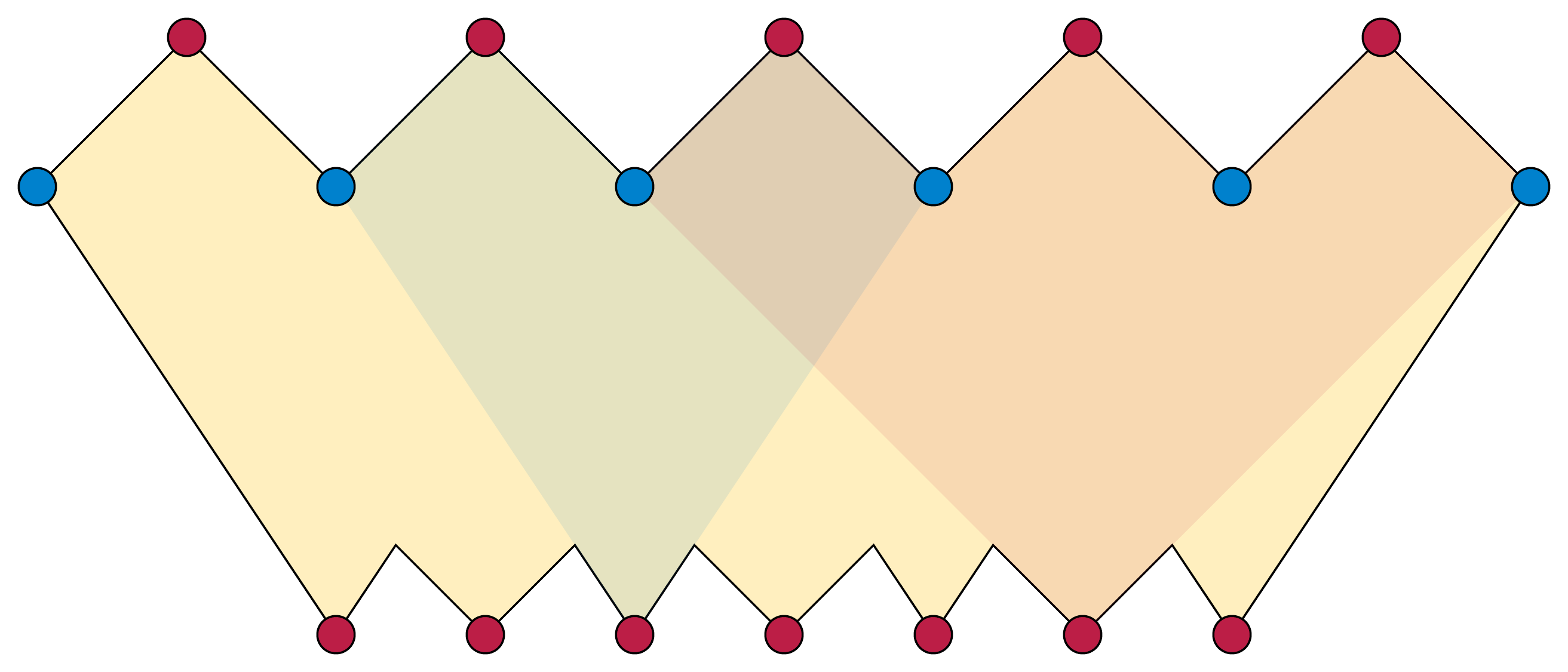}
\caption{Representing an interchange as the visibility graph of a simple polygon.}
\label{fig:visibility-interchange}
\end{figure}

Theorem~\ref{thm:polyVis} also follows from the fact that simple polygon visibility graphs are monadically independent~\cite{BonChaKim-IPEC-22}, which was shown using a similar construction. Furthermore, simple polygon visibility graphs are \emph{cop-win graphs}; a single cop wins a different cop-and-robber game in which both players move along graph edges or stand still~\cite{LubSnoVos-CGTA-17}. But this has no implications for flip-width; adding a universal vertex to any graph makes it cop-win but does not change the boundedness of its flip-width.

The \emph{$\beta$-skeletons} (for $\beta\le 1$) are defined from a set of points by constructing for each pair of points a lune, the intersection of two congruent disks that cross at these points with angle $\pi-\sin^{-1}\beta$. Two points are adjacent if this lune contains no other given points~\cite{KirRad-CG-85}.

\begin{theorem}
For any $\beta<1$, the $\beta$-skeletons have unbounded flip-width.
\end{theorem}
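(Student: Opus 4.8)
\emph{The plan is} to realise arbitrarily large interchanges as $\beta$-skeletons and then invoke \cref{cor:interchange}. The key tool I would set up first is an angular reformulation of adjacency. For a fixed $\beta<1$ the lune of two points $p,q$ is a thin convex lens, pinched to an acute tip (of interior angle $\arcsin\beta$) at each of $p$ and $q$, and a point $r$ lies inside it exactly when the segment $pq$ subtends a large angle at $r$. Concretely, there is a threshold $\psi_0=\psi_0(\beta)$ with $\tfrac{\pi}{2}<\psi_0<\pi$ such that $r$ lies in the lune of $p,q$ if and only if $\angle prq>\psi_0$; one may take $\psi_0=\pi-\arcsin\sqrt{(1-\sqrt{1-\beta^2})/2}$, which ranges over $(\tfrac{3\pi}{4},\pi)$ as $\beta$ ranges over $(0,1)$. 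Hence $p$ and $q$ are adjacent in the $\beta$-skeleton if and only if no other point of the set subtends an angle exceeding $\psi_0$ at the segment $pq$. Write $\phi:=\pi-\psi_0>0$ for the angular slack.

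For the realisation I would place the $n$ lanes $\ell_1,\dots,\ell_n$ in convex position on a shallow circular arc, in order, and place each ramp $R_{ij}$ (for $i<j$) just outside the arc, near the point where the extension of the edge $\ell_{j}\ell_{j+1}$ past $\ell_j$ meets the extension of the edge $\ell_{i-1}\ell_i$ past $\ell_i$ (with the obvious one-sided placement when $i=1$ or $j=n$, where no forbidden lane exists on that side). This makes $\ell_j$ lie almost straight between $R_{ij}$ and every lane $\ell_k$ with $k>j$, and symmetrically $\ell_i$ between $R_{ij}$ and every $\ell_k$ with $k<i$. For the forbidden non-edges, the shield $\ell_j$ then subtends an angle close to $\pi$ -- in particular exceeding $\psi_0$ -- at each segment $R_{ij}\ell_k$, because these outside lanes lie on the part of the arc that curves away toward the far side of $R_{ij}$; the adjacent lane $\ell_{j+1}$, which is forced nearly collinear with $R_{ij}\ell_j$ by the construction, is precisely the case that pins down the placement, and the remaining outside lanes only subtend larger angles. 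For the required edges $R_{ij}\ell_i$ and $R_{ij}\ell_j$ one checks that no intervening lane $\ell_m$ with $i<m<j$, and no other ramp, subtends more than $\psi_0$: such a lane sees $R_{ij}$ roughly ``inward'' and $\ell_j$ roughly ``along the arc,'' an angle that stays below $\psi_0$ once the arc is flat enough and $R_{ij}$ is pushed far enough out.

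The main obstacle is to satisfy these two families of angle inequalities \emph{simultaneously} for the given fixed $\beta$: the shields must subtend more than $\psi_0$ (wanting $R_{ij}$ far out along the two chord extensions, nearly collinear with its shield and the next lane), while every intervening lane and every other ramp must subtend at most $\psi_0$ (wanting a flat arc and small angles). These pull against each other most strongly at the adjacent lanes $\ell_{j\pm1}$, and the constraint on the required edges becomes tightest as $\beta\to1$, where the threshold $\psi_0$ drops toward $\tfrac{3\pi}{4}$ and intervening lanes most easily reach it. I expect the resolution to be a routine calculation choosing the arc's radius and the ramp offsets as functions of $\phi(\beta)$, using the strict positivity of $\phi$ to leave room between the two thresholds, together with a small vertical perturbation of the ramps to exclude ramp--ramp blocking (if convex position proves too rigid, one can instead insert an explicit blocking point into each forbidden lune, which only has to avoid the required lunes). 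Since the $n$ lanes and their $O(n^2)$ ramps then form an interchange of order $n$ for every $n$, \cref{cor:interchange} yields that the $\beta$-skeletons have unbounded flip-width.
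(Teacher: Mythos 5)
Your high-level plan (realize arbitrarily large interchanges and apply \cref{cor:interchange}) is exactly the paper's, but your primary construction has a fatal geometric flaw that your suggested repairs cannot fix. By your own placement rule, all ramps $R_{i'j}$ with the same right lane $j$ lie on a single ray, the extension of $\ell_j\ell_{j+1}$ past $\ell_j$; so for $i<m<j$, the ramp $R_{mj}$ lies on (or, using your ``near'' wiggle room, very close to) the segment from $R_{ij}$ to $\ell_j$. Then $\angle R_{ij}\,R_{mj}\,\ell_j\approx\pi>\psi_0$, so $R_{mj}$ sits inside the lune of the \emph{required} edge $R_{ij}\ell_j$ and destroys it: for each fixed $j$, at most one ramp ending at $\ell_j$ can keep its required edge. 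A ``small vertical perturbation'' cannot rescue this, because for fixed $\beta<1$ the lune of a segment of length $L$ has width $\Theta(L)$ (maximum half-width $\tfrac{1-\sqrt{1-\beta^2}}{2\beta}L$), so escaping it requires displacing $R_{mj}$ by a distance proportional to $\min(m-i,\,j-m)$ lane spacings; meanwhile your own two families of constraints pin each ramp to a window of height about \emph{one} lane spacing times $\tan\phi$ (the shield condition at $\ell_j$ forces the ramp's height to be at most $\tan\phi$ times its horizontal distance to $\ell_j$, while keeping $\ell_{j-1}$ out of the required lune forces it to be at least $\tan\phi$ times its horizontal distance to $\ell_{j-1}$). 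So for large $n$ the conflict is unavoidable in this geometry. Two smaller errors: on a convex arc the outside lanes beyond $\ell_{j+1}$ subtend \emph{smaller} angles at the shield, not larger (fixable by flattening the arc relative to $\phi$), and your threshold formula is wrong --- the correct one is $\psi_0=\pi-\arcsin\beta$, which equals $\pi/2$ at $\beta=1$ (the Gabriel graph), not $3\pi/4$.

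Your fallback --- insert an explicit blocking point into each forbidden lune, avoiding the required lunes --- is the right idea and is essentially what the paper does, but it is precisely where all the work lies, and it cannot simply be bolted onto your arc geometry, where required lunes are already violated by other ramps rather than merely unblocked. The paper changes the geometry entirely so that required edges hold for free: lanes go on the line $y=0$, ramps on $y=1$, with all $x$-coordinates compressed so close to $0$ that every lane--ramp lune stays inside the slab $0\le y\le 1$. Since a lune meets each boundary line of the slab only at one tip, no lane and no ramp can ever lie in another lane--ramp lune, so every required (and optional) lane--ramp edge exists automatically. Only the auxiliary blocking points then need care: two per ramp, on the line $y=1-\varepsilon$, placed just outside the lunes from that ramp to its own two lanes --- hence inside the slightly more tilted lunes to every lane outside its range --- and close enough to their ramp to avoid all lunes of other ramps. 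You would need some such decoupling mechanism, making the required lunes empty by construction rather than by a delicate balance, before your argument can be completed.
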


\begin{proof}
Place vertices representing lanes and ramps on the lines $y=0$ and $y=1$ respectively, evenly spaced and very close to the line $x=0$, close enough to ensure that each lane--ramp lune stays within the slab $0\le y\le 1$. For each ramp $r$, place two blocking points on the line $y=1-\varepsilon$ for suitably small $\varepsilon$, close enough to $r$ to avoid all lunes from other ramps. These blocking points should be just outside the two lunes connecting $r$ to its two lanes, one to the left and one to the right, so that any lune connecting $r$ to a lane outside of its range of lanes contains one of the blocking points and is non-empty. The resulting $\beta$-skeleton forms an interchange with all optional lane--ramp edges.
\end{proof}

\begin{observation}
\label{obs:hypercube-interchange}
The graph of a $d$-dimensional hypercube contains an interchange of order $d$.
\end{observation}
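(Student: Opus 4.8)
The plan is to realize the lanes and ramps directly as low-weight vertices of the hypercube $Q_d$. I identify each vertex of $Q_d$ with a subset $S \subseteq \{1,\dots,d\}$ (the support of its binary string), so that two vertices are adjacent exactly when their symmetric difference has size one. I take the $d$ singletons $\{1\},\{2\},\dots,\{d\}$, in this order, to be the lanes, and for each pair $\{i,j\}$ I take the two-element set $\{i,j\}$ to be the ramp associated with lanes $\{i\}$ and $\{j\}$. All $\binom{d}{2}$ two-element sets are distinct and distinct from the singletons, so this supplies a distinct ramp for every pair of lanes, in particular for every pair within $n-3$ steps.

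The only thing that needs checking is the adjacency pattern. First, a ramp $\{i,j\}$ is adjacent to a lane $\{k\}$ precisely when $\{i,j\} \mathbin{\triangle} \{k\}$ has size one; this happens exactly when $k \in \{i,j\}$ and fails when $k \notin \{i,j\}$, since then the symmetric difference is $\{i,j,k\}$, of size three. Hence each ramp is adjacent to its two associated lanes and to no other lane, which simultaneously supplies all required ramp--lane edges and guarantees that no forbidden edge to an out-of-range lane is present. Second, any two singletons have symmetric difference of size two, and any two distinct two-element sets have symmetric difference of size two or four, so neither any lane--lane nor any ramp--ramp pair is adjacent; since those edges are optional, their absence is harmless.

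The main (and essentially only) obstacle is the middle verification, namely ruling out forbidden ramp-to-distant-lane edges. This is exactly where the hypercube geometry cooperates, because the Hamming distance from a weight-two vertex to a non-incident weight-one vertex is three rather than one, so the unwanted edges simply cannot occur. With that computation in hand, all of the conditions in the definition of an order-$d$ interchange are met, so $Q_d$ contains an interchange of order $d$ and the observation follows.
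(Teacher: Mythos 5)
Your proposal is correct and uses exactly the construction in the paper: lanes are the singletons and ramps are the two-element subsets, with adjacency in the hypercube (symmetric difference of size one) supplying precisely the required ramp--lane edges and no forbidden ones. The paper's proof is just a terser statement of this same argument, leaving the Hamming-distance verification implicit.
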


\begin{proof}
This is the graph of subsets of a $d$-element set, adjacent when they differ by one element. Let lanes be singletons  and ramps be two-element sets.
\end{proof}

\emph{Rectangle of influence graphs} connect pairs of points in the plane when their bounding box contains no other points~\cite{IchSkl-PR-85,LioLubMei-CGTA-98}. Sources vary on how to treat points on the boundary of this bounding box, but that  can be avoided using point sets with no equal coordinates.

\begin{figure}[t]
\centering\includegraphics[width=0.7\columnwidth]{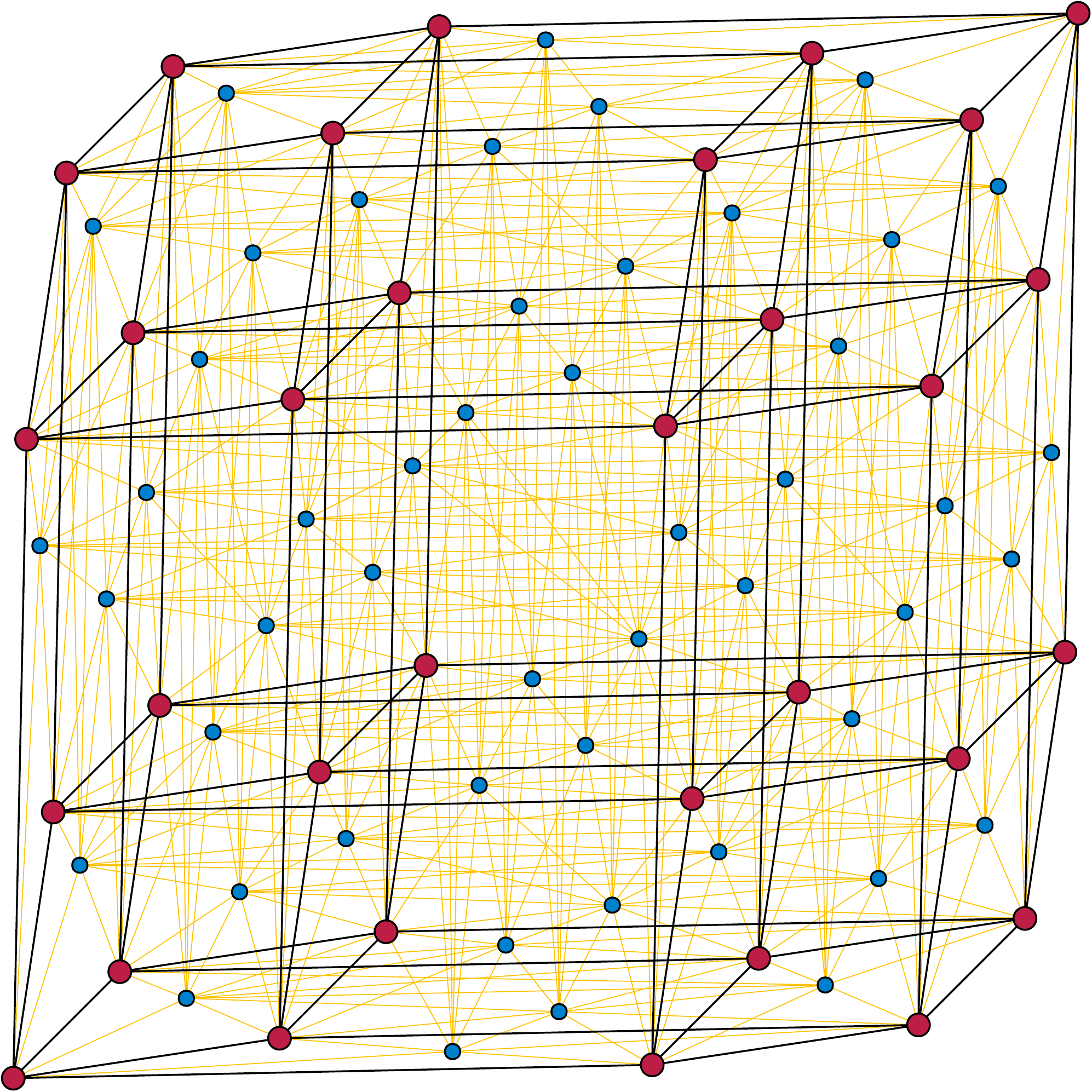}
\caption{A 5-dimensional hypercube graph as an induced subgraph of a rectangle of influence graph.}
\label{fig:empty-rectangle}
\end{figure}

\begin{theorem}
\label{thm:rectangle}
Rectangle of influence graphs induce high-dimension hypercubes and have unbounded flip-width.
\end{theorem}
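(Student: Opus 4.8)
The plan is to reduce everything to the hypercube machinery already set up. By \cref{obs:hypercube-interchange}, the $d$-dimensional hypercube graph $Q_d$ contains an interchange of order $d$ on its singletons and pairs, and an interchange is determined entirely by the presence of its required lane--ramp edges and the absence of the forbidden ramp--lane edges (all other edges being optional). Hence if $Q_d$ occurs as an \emph{induced} subgraph of a rectangle of influence graph, that same interchange occurs there as well. So by \cref{cor:interchange} it suffices to exhibit, for every $d$, a finite planar point set with no two equal coordinates whose rectangle of influence graph contains $Q_d$ as an induced subgraph; unbounded flip-width then follows at once.

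For the construction I would place one ``cube point'' $p_S$ for each subset $S\subseteq\{1,\dots,d\}$, using a \emph{monotone} placement $p_S=(x_S,y_S)$ with $x_S=\sum_{i\in S}a_i$ and $y_S=\sum_{i\in S}b_i$ for positive weights chosen so that the vectors $(a_i,b_i)$ form an antichain (for instance $a_i$ increasing and $b_i$ decreasing in $i$). Monotonicity gives $p_S\le p_T$ coordinatewise whenever $S\subseteq T$, which forces two useful structural facts simultaneously. First, every comparable pair $S\subsetneq T$ with $|T\setminus S|\ge 2$ is automatically a non-edge, because for any intermediate set $U$ the point $p_U$ lies in the bounding box of $p_S$ and $p_T$. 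Second, for a cover relation $S,\,S\cup\{j\}$ the bounding box is the thin rectangle $[x_S,x_S+a_j]\times[y_S,y_S+b_j]$, and the antichain condition is exactly what keeps the immediate competitors $S\cup\{k\}$ out of it. This leaves only the incomparable pairs, all of which must be non-edges, and I would kill each by inserting an auxiliary blocking point strictly inside its bounding box.

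The main obstacle is precisely this last step, on two fronts. The weights must be tuned so that \emph{every} cover box is genuinely empty of all other cube points --- not only the immediate competitors $S\cup\{k\}$, but also sets obtained by swapping several elements at once, whose coordinate displacements can be small through cancellation --- and at the same time the blockers placed for the incomparable pairs must be kept out of all of these cover boxes. I expect the cleanest route is to take fast-growing, generic weights so that each coordinate $x_U$ (and $y_U$) essentially reveals the highest index in which $U$ differs from $S$, thereby pinning down the contents of each cover box, and then to place the blocker for an incomparable pair $\{S,T\}$ strictly inside its box but close to a corner, arguing that this corner region is disjoint from every cover box. A final generic perturbation of all coordinates enforces the ``no two equal coordinates'' hypothesis without creating or destroying any of the wanted adjacencies; \cref{fig:empty-rectangle} illustrates the resulting configuration for $d=5$.

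With the induced copy of $Q_d$ in hand, \cref{obs:hypercube-interchange} supplies an interchange of order $d$ inside the rectangle of influence graph, and letting $d\to\infty$ together with \cref{cor:interchange} yields unbounded flip-width, completing the proof.
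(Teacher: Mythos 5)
Your reduction to induced hypercubes is exactly the paper's: both arguments produce arbitrarily large induced copies of $Q_d$ in rectangle of influence graphs and then apply \cref{obs:hypercube-interchange} together with \cref{cor:interchange}. The constructions of the induced hypercube differ genuinely, though. The paper is recursive: $X_d$ is two vertically scaled copies of $X_{d-1}$ placed side by side, with a shifted copy of the hypercube subset $Y_{d-1}$ in between; that single middle copy blocks \emph{every} unwanted rectangle between the two outer copies at once, and induction handles everything inside each copy, so no pair-by-pair blocking is ever needed. You instead place all $2^d$ cube points simultaneously as subset sums with antichain weights and then repair each incomparable pair with its own blocking point. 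Your preliminary observations are correct: comparable non-cover pairs are killed by intermediate cube points, the antichain condition keeps the competitors $p_{S\cup\{k\}}$ out of the cover box of $(S,S\cup\{j\})$, and fast-growing weights (say $a_i=N^i$, $b_i=N^{d+1-i}$, which also form an antichain) do make every cover box empty of all other cube points by a dominant-digit argument.

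The gap is the blocker placement, and your proposed recipe --- put the blocker ``strictly inside its box but close to a corner'' --- fails at some corners. Take $S=\{2\}$, $T=\{1,3\}$: these are incomparable as sets, but with your monotone weights $p_T$ dominates $p_S$ coordinatewise, so the corners of their bounding box are $p_S$, $p_T$, and two empty corners. Every point of that box sufficiently close to $p_S$ lies inside the cover box of $(S,S\cup\{j\})$ for any $j\notin S$, because $[x_S,x_S+a_j]\times[y_S,y_S+b_j]$ contains a full neighborhood of $p_S$ within the quadrant into which the incomparable box opens; the same happens near $p_T$ with the cover boxes of $(T\setminus\{j\},T)$. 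A blocker near either cube-point corner therefore deletes hypercube edges. You would have to use the two non-point corners and prove that they avoid all cover boxes, which the proposal does not do. A clean repair avoids corners entirely: with $N>d\,2^{d+1}$, each cover box has area $N^{d+1}$ and there are $d\,2^{d-1}$ of them, while every incomparable pair's box has area at least $\frac{1}{4}N^{d+2}$ (the largest and smallest indices of the symmetric difference differ by at least one), so each incomparable box has interior points outside the union of all cover boxes; placing blockers at such points, generically so that all coordinates stay distinct, completes your argument. With this patch your route works; it is simply heavier than the paper's recursion, which gets all of the blocking from the middle copy for free.
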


\begin{proof}
We recursively construct integer points with distinct coordinates whose rectangle of influence graphs contain arbitrarily large induced hypercubes. As a base case, the two points $(0,0)$ and $(1,1)$ give a one-dimensional hypercube graph. If $X_{d-1}$ is defined in this way, with subset $Y_{d-1}$ inducing a $(d-1)$-dimensional hypercube, construct $X_d$ by placing side to side the following three sets: (1) a copy of $X_{d-1}$ scaled vertically by a factor of three; (2) a copy of $Y_{d-1}$ scaled by the same factor, offset vertically by two units (and missing its topmost point), and (3) another scaled copy of $X_{d-1}$, offset vertically by one unit. Choose $Y_d$ to be the copies of $Y_{d-1}$ within the first and third of these three sets. Scaling does not affect the hypercube graphs within these copies. The middle copy of $Y_{d-1}$ blocks all empty rectangles stretching from the first copy to the third copy, except those between corresponding pairs of points, so $Y_d$ induces a $d$-dimensional hypercube graph.
\end{proof}

\cref{fig:empty-rectangle} illustrates five levels of the recursive construction of \cref{thm:rectangle}, with another (cosmetic) step that compacts the coordinates to use consecutive integers.

\begin{theorem}
The graphs of four-dimensional convex polytopes, and of three-dimensional Euclidean Delaunay triangulations, have unbounded flip-width.
\end{theorem}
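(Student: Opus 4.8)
The plan is to handle the two parts with a single geometric construction, linked by the standard paraboloid lifting. I would first build a finite set of points in $\mathbb{R}^3$ whose Delaunay triangulation contains an interchange of order $n$, for arbitrarily large $n$, and then transfer this example to a four-dimensional polytope by lifting each point $p$ to $\hat p = (p, \lVert p\rVert^2) \in \mathbb{R}^4$. By \cref{cor:interchange} it suffices to realize arbitrarily large interchanges in each of the two graph classes.

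For the Delaunay part, recall that $p$ and $q$ are Delaunay-adjacent exactly when some ball has $p$ and $q$ on its boundary and no input point in its interior. I would place the lanes $\ell_1,\dots,\ell_n$ along a convex arc in a horizontal plane, spaced so that consecutive lanes are close together, and place each ramp $r_{ij}$ (for lanes $i\le j$ with $j-i\le n-3$) at a point whose height above the plane grows with the span $j-i$. The goal is to make the empty-ball condition encode the interval structure of an interchange: a small empty ball should witness adjacency of $r_{ij}$ to its two endpoint lanes $\ell_i,\ell_j$ (and, harmlessly, to in-range lanes), while every ball through $r_{ij}$ and an out-of-range lane $\ell_m$ is forced to enclose one of the intervening lanes and hence is non-empty. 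As in the $\beta$-skeleton construction, I would reinforce this occlusion by placing a pair of blocking points just outside the two extreme balls of each ramp, close enough to $r_{ij}$ to disturb none of the other ramps' balls. This yields a Delaunay triangulation in which the required lane--ramp edges are present and the forbidden ones are absent, i.e. an interchange of order $n$.

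For the polytope part, I would exploit the fact that the Delaunay graph of $P$ is exactly the $1$-skeleton of the \emph{lower} hull of $\hat P$, and that every lower face is a genuine face of the full polytope $\operatorname{conv}(\hat P)$. Required (Delaunay) edges are therefore edges of the polytope; the only danger is that a forbidden pair, while non-Delaunay, might appear as an \emph{upper}-hull edge and so survive in the polytope graph. To remove this danger I would adjoin a single apex $v=(0,0,0,M)$ with $M$ enormous and take $Q=\operatorname{conv}(\hat P\cup\{v\})$. Since every lifted point lies on the lower hull, all of them remain vertices of $Q$; the lower faces are unaffected by the faraway apex and persist as faces of $Q$, whereas all former upper faces move into the interior. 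Consequently the edges of $Q$ among the vertices $\hat P$ are exactly the Delaunay edges, while the only new edges run from $v$ to silhouette vertices and never join a ramp to a lane. Thus the same interchange appears in the graph of the $4$-polytope $Q$, and \cref{cor:interchange} again applies.

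The main obstacle is the quantitative geometry of the Delaunay construction: one must choose the arc, the ramp heights, and the blocking points so that, simultaneously, (i) each required lane--ramp pair has an empty ball, and (ii) each forbidden pair has none, all while keeping the points in sufficiently general position that no five are cospherical. Verifying (ii) is the delicate direction, since it requires showing that \emph{every} ball through an out-of-range pair is non-empty; I expect the cleanest route is to pin down, for each ramp, the one-parameter family of balls through it and a target lane and to check that the intervening lanes or the blocking points always intrude. Once the Delaunay realization is established, the lifting argument above is routine and delivers the polytope result with no further geometric work.
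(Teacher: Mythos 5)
There is a genuine gap, and it sits exactly at the load-bearing point of your argument. Everything hinges on the claim that arbitrarily large interchanges can be realized in three-dimensional Delaunay triangulations, but you never actually construct such a point set: you describe a plan (lanes on a convex arc, ramps at span-dependent heights, blocking points) and then defer the crucial verification---that \emph{every} ball through a forbidden ramp--lane pair is non-empty---as ``the delicate direction'' you ``expect'' can be handled. That is the entire difficulty, not a routine check. Two concrete problems with the sketch: first, the balls through two fixed points of $\mathbb{R}^3$ form a \emph{two}-parameter family (centers range over the bisector plane), not the one-parameter family you assert, so non-adjacency requires covering a two-dimensional family of balls by witnesses; second, the analogy with the $\beta$-skeleton construction does not transfer, because there each pair defines a \emph{single} lune and blockers need only sit just outside finitely many fixed regions, whereas Delaunay non-adjacency quantifies over all balls through the pair, and a constant number of blocking points per ramp is not obviously (and perhaps not actually) sufficient. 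Placing the lanes in convex position works against you here: intervening lanes curve \emph{away} from the balls joining a high ramp to a distant lane rather than into them. Your lifting-plus-apex step for the 4-polytope part is correct and standard, but it inherits the gap, since it produces nothing without the Delaunay realization.

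It is worth knowing that the paper deliberately avoids your route. For 4-polytopes it builds interchanges directly and cheaply: barycentric subdivisions of neighborly polytopes are polytopal, and subdividing every edge of a complete graph yields an interchange (original vertices as lanes, subdivision vertices as ramps). For Delaunay triangulations it uses \emph{no} interchange at all: it constructs an inscribed 4-polytope containing an induced subdivision of $K_{n/2,n/2}$, stereographically projects to obtain a 3-dimensional Delaunay triangulation, and invokes Toruńczyk's theorem that weakly sparse classes (no $K_{t,t}$ subgraph) have bounded flip-width if and only if they have bounded expansion. The paper's appendix explicitly flags the Delaunay case as the one class where interchanges are not used, and the paper leaves the radius-1 flip-width of 3D Delaunay triangulations open---both strong hints that realizing interchanges inside 3D Delaunay triangulations (which would yield those stronger conclusions via \cref{thm:radius-2}) is genuinely hard, and certainly not something to be waved through as quantitative bookkeeping.
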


\begin{proof}
For 4-polytopes, consider the barycentric subdivisions of neighborly polytopes. The graph of a neighborly polytope is complete, and barycentric subdivision preserves realizability as a convex polytope~\cite{EwaShe-MA-74}. The barycentric subdivision replaces the edges of the complete graph by disjoint two-edge paths. The original vertices of the complete graph form the lanes, and the subdivision points of these paths form the ramps, of an interchange, whose order equals the number of vertices in the neighborly polytope.

For Delaunay triangulations, we do not use interchanges; instead we rely on a result of Toruńczyk that weakly sparse graphs (that is, graphs with no $K_{t,t}$ subgraph for some $t$) have bounded flip-width if and only if they have bounded expansion~\cite{Tor-23}. To construct a Delaunay triangulation that does not have bounded flip-width, we begin with the convex hull of certain points in $\mathbb{R}^4$ (coordinatized by pairs of complex numbers), the union of the following three sets of points on a unit sphere, for a given even integer parameter $n$:
\begin{itemize}\setlength\itemsep{0em}
\item The $n$ points $(e^{2\pi i/n},0)$ for integer $i$, $0\le i<n$.
\item The $n$ points $(0,e^{2\pi j/n})$ for integer $j$, $0\le j<n$.
\item The $n^2$ points $(e^{2\pi i/n}/\sqrt2,e^{2\pi j/n}/\sqrt2)$ for $i$ and $j$ in the same range.
\end{itemize}
The points with both coordinates nonzero form a square grid on the flat torus $\{(x,y)\mid |x|=|y|=1/\sqrt2\}$, and the other two subsets of points form $n$-gons (not faces of the convex hull) in the planes $x=0$ and $y=0$. Each edge of an $n$-gon is parallel to the family of edges connecting a ring of squares on the torus, and the facets of the convex hull are warped triangular pyramids  connecting an $n$-gon edge to one of these parallel squares. The subgraph induced in the graph of this polytope by the vertices for which $i$ and $j$ are both even is the subdivision of a complete bipartite graph $K_{n/2,n/2}$. This subdivision has no $K_{2,2}$ subgraph and does not have bounded expansion, so it does not have bounded flip-width.

To transform this inscribed 4-polytope into a Delaunay triangulation, we apply a stereographic projection whose pole is the center point of one of the squares on the torus. On the unit sphere in $\mathbb{R}^4$, this pole belongs to only two of the circumspheres of the facets, for the two facets meeting at this square. Stereographic projection preserves spheres on the unit sphere, so the empty spheres of all of the other facets project to empty spheres for the corresponding set of six points in $\mathbb{R}^3$; that is, these six points form a prism-shaped cell in the Delaunay complex of the projected points. Each edge of the 4-polytope is part of one of these Delaunay cells, so the Delaunay graph of the projected points is the same as that of the complex. Perturbing the points to form a Delaunay triangulation, and keeping only the points for which $i$ and $j$ are both even, again produces the subdivision of a complete bipartite graph, as the subgraph of a three-dimensional Delaunay triangulation.
\end{proof}

\section{Radius-1 flip-width}

Our escape strategy for the robber through interchanges involves the robber taking two steps per move. It is natural to ask whether this can be strengthened to allow escapes of only one edge per move for the same classes of geometric graphs. That is, do these geometric graphs have bounded or unbounded radius-1 flip-width?

In the treewidth game, radius-1 corresponds to \emph{degeneracy}, where a graph has degeneracy $\le d$ if and only if its vertices can be ordered so that each vertex has $\le d$ earlier neighbors. The radius-1 treewidth game can be won by $d+1$ cops, who always play on the current vertex of the robber and its earlier neighbors, forcing the robber to move later in the ordering. On the other hand, if the degeneracy is greater than $d$ then the graph has a $(d+1)$-core, an induced subgraph with minimum degree $d+1$, within which the robber is safe: no matter where the cops move next, there will be an unoccupied vertex for the robber within one step~\cite{Tor-23}.

For the radius-1 flip-width game, Toruńczyk identifies a corresponding concept to a core, which we call a \emph{$\Delta$-diverse subgraph}. This is an induced subgraph in which the open neighborhoods of each two vertices differ by at least $\Delta$ vertices. As Toruńczyk proves, a family of graphs whose graphs contain $\Delta$-diverse subgraphs, for arbitrarily large $\Delta$, has unbounded radius-1 flip-width~\cite{Tor-23}. For cops that make $t$ flips per move, the robber can escape by staying within a $2^{t+1}$-diverse subgraph and moving to a vertex $v$ in this subgraph such that, after the announced flips, $v$ will have at least $2^t$ neighbors. These neighbors, and $v$ itself, form a set of $2^t+1$ vertices within which the robber can move, some two of which (say $x$ and $y$) will be equivalent after the cops' flips. Each vertex adjacent to exactly one of $x$ and $y$ will remain adjacent to exactly one after the flips, so one of $x$ and $y$ will have $\ge 2^t$ neighbors, enough to continue the same strategy.

Slightly more generally, define a $(\Delta,\chi)$-diverse subgraph for a subset of vertices in a given graph and an (improper) $\chi$-coloring of those vertices, keeping all properly colored edges and removing all improperly colored ones, with diversity measured between vertices of the same color. The robber escapes by staying in a $(2^{t+1}(\chi-1)+2,\chi)$-diverse subgraph and moving to a vertex with at least $2^t(\chi-1)+1$ differently-colored neighbors. Some two neighbors will have the same color and be treated equivalently by all flips, and one of these two will have enough neighbors in the next move.

\begin{lemma}
An order-$n$ interchange with all optional lane--ramp edges has an $(\Omega(n^{1/3}),2)$-diverse subgraph.
\end{lemma}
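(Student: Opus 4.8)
The plan is to take $\chi=2$, coloring every lane with one color and every ramp with the other. Since all lane--lane and ramp--ramp edges are then improperly colored they are deleted, so the surviving graph is exactly the bipartite incidence between lanes and ramps; because we include all optional lane--ramp edges, a ramp joining lanes $x<y$ is adjacent to precisely the lanes in the position-interval $[x,y]$. Identifying each ramp with its interval $[a,r]$ of endpoints, the diversity requirement splits into two combinatorial conditions: (i) any two chosen ramps must differ in at least $\Delta$ chosen lanes, i.e.\ their intervals have symmetric difference $\ge\Delta$; and (ii) any two chosen lanes $v<w$ must be separated by at least $\Delta$ chosen ramps, where a ramp separates $v$ and $w$ exactly when its interval contains exactly one of them.

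First I would record the two geometric meanings of these conditions in the $(a,r)$ half-plane $\{a<r\}$. For two overlapping intervals the symmetric difference equals the $\ell_1$-distance $|a-a'|+|r-r'|$, while two disjoint intervals differ by the sum of their lengths; so (i) holds as soon as the endpoint points are pairwise at $\ell_1$-distance $\ge\Delta$ \emph{and} every interval has length $\ge\Delta$. For (ii), a ramp separates consecutive lanes $v,v+1$ precisely when it has an endpoint at that boundary, so each of the $\Theta(n)$ lane-boundaries needs $\ge\Delta$ incident endpoints. These two demands pull against each other, and balancing them explains the exponent: a family of intervals with pairwise symmetric difference $\ge\Delta$ is an $\ell_1$-packing and so has at most $O(n^2/\Delta^2)$ members, whereas condition (ii) forces $\Omega(n\Delta)$ endpoint--boundary incidences, hence $\Omega(n\Delta)$ intervals. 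Combining $n\Delta=O(n^2/\Delta^2)$ gives $\Delta=O(n^{1/3})$, and the construction should match this.

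Concretely I would fix a parameter $s$, set $n=\Theta(s^3)$ with all of $0,1,\dots,n-1$ chosen as lanes, and take the chosen ramps to be the intervals $[a,r]$ whose endpoint-pair $(a,r)$ lies in the sheared lattice $\Lambda=\langle(1,s+1),(0,s^2)\rangle$ inside the triangle $0\le a<r\le n$. The shear $s+1$ is chosen so that the first $s$ multiples of $s+1$ stay $\Omega(s)$-far from every multiple of $s^2$; a short computation then shows that $\Lambda$ has $\ell_1$-minimum distance $s$ and that every length $r-a$ is a positive multiple of $s$, which yields condition (i). For condition (ii) the key virtue of this lattice is that it equidistributes in both endpoint coordinates, so that each value is a heavy endpoint: a left endpoint of $\Theta(n/s^2)=\Theta(s)$ intervals when it lies in the left half of the range, and a right endpoint of $\Theta(s)$ intervals when it lies in the right half. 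Thus for any pair $v<w$ one finds $\ge\Delta$ separating ramps by a short case analysis --- using the ramps ending at $v$ when $v$ lies in the right half, the ramps starting at $w$ when $w$ lies in the left half, and, for a straddling pair, the $\approx(v{+}1)(w{-}v)/s^2$ intervals that contain $v$ but stop before $w$.

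The hard part is exactly the tension flagged above: a minimum-distance-$\Delta$ family is forced to be an almost-optimal $\ell_1$-packing, yet condition (ii) simultaneously requires its endpoints to be spread evenly enough to hit every one of the $\Theta(n)$ lane-boundaries $\ge\Delta$ times. A single maximal packing (the $45^\circ$-rotated grid) concentrates all of its endpoints on only $O(n/\Delta)$ values and fails (ii), while naively overlaying shifted copies destroys the packing property; the sheared lattice is engineered to do both at once, at the cost of a constant factor in $\Delta$. The remaining work is routine but fiddly: lower-bounding the distance of $i(s+1)$ from the multiples of $s^2$ for $1\le i\le s$ to certify the minimum-distance bound, confirming the $\Theta(s)$ endpoint counts (in particular near the two ends of the lane range, where only one endpoint type is heavy), and checking the straddling case so that the $O(1)$ slack in the lattice counts is safely absorbed into the final $\Omega(n^{1/3})$ bound.
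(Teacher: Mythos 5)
Your proposal is essentially the paper's proof: the paper also keeps all lanes, two-colors by lanes/ramps, and selects the ramps as a lattice-like subset of the $(x,y)$-grid with $L_1$-minimum distance and minimum interval length both $\Theta(n^{1/3})$ --- it uses the rotated-grid congruence $x\equiv ky \bmod (k^2+1)$ together with an explicit length condition $|y-x|\ge k$, where you use a sheared lattice $\langle(1,s+1),(0,s^2)\rangle$ whose interval lengths are automatically positive multiples of $s$; the two verification steps (ramp pairs via $\ell_1$-separation plus length, lane pairs via endpoint counts) are identical in structure, and your left-half/right-half/straddling case analysis is if anything more explicit than the paper's one-line claim.

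One intermediate claim in your write-up is false as stated: the first $s$ multiples of $s+1$ do \emph{not} stay $\Omega(s)$-far from the multiples of $s^2$, since $(s-1)(s+1)=s^2-1$ is at distance $1$ from $s^2$. This does not break the construction, because the $\ell_1$-minimum-distance bound you need counts both coordinates: for a lattice vector $\bigl(i,\,i(s+1)+js^2\bigr)$ with $1\le i\le s-1$ one has
\[
|i| + \min\bigl(i(s+1),\, s^2-i(s+1)\bigr) \;\ge\; \min\bigl(i(s+2),\, s(s-i)\bigr) \;\ge\; s,
\]
so the second coordinate's residue is small only when the first coordinate $|i|$ is already close to $s$. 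So replace the "multiples stay far apart" justification by this two-coordinate computation and the proof goes through as planned.
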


\begin{proof}
Number the lanes of the interchange from $0$ to $n-1$, and choose a parameter $k\approx n^{1/3}$. Form a two-colored subgraph (colored by lanes and ramps) with all lanes and a subset of ramps, the ramps for lanes $x$ and $y$ with $x<y$ that meet the following two conditions: (1) $|y-x|\ge k$, and (2) $x\equiv ky\bmod{k^2+1}$. The second condition leaves $\Theta(n^{1/3})$ ramps starting or ending at each lane, enough so every two lanes have diverse neighborhoods.

\begin{figure}[t]
\centering\includegraphics[width=0.8\columnwidth]{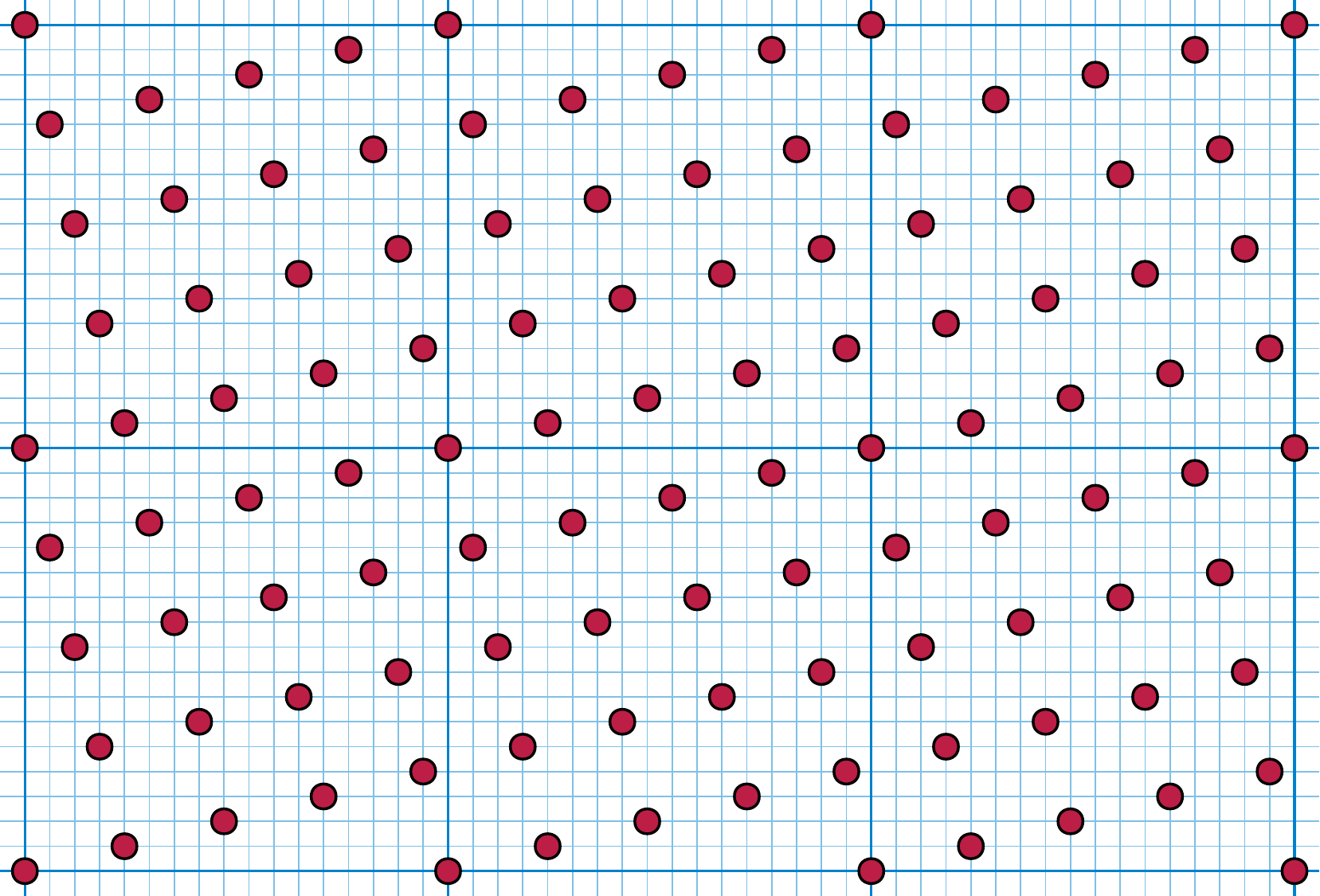}
\caption{The integer points $(x,y)$ with $x\equiv ky\bmod{k^2+1}$ (shown for $k=4$) form a tilted grid in which all pairs of points have $L_1$ distance at least $k+1$.}
\label{fig:tilted-grid}
\end{figure}

Two ramps with disjoint ranges of lanes are distinguished by all their neighbors, of which there are $\Omega(n^{1/3})$ by the first condition. Two ramps with overlapping ranges of lanes $[x,y]$ and $[x',y']$ are distinguished by the lanes between $x$ and $x'$, and the lanes between $y$ and $y'$; there are $|x-x'|+|y-y'|$ such lanes. The condition that $x\equiv ky\bmod{k^2+1}$ defines a subset of the integer grid in the form of a tilted square grid in which all pairs of points are at $L_1$ distance at least $k+1$ (\cref{fig:tilted-grid}), so the number of lanes that are neighbors of only one of the two ramps is also at least $k+1=\Omega(n^{1/3})$.
\end{proof}

\begin{corollary}
Interval graphs, permutation graphs, circle graphs, intersection graphs of axis-aligned unit squares, unit disk graphs, visibility graphs of simple polygons, and $\beta$-skeletons with $\beta<1$ have unbounded radius-1 flip-width.
\end{corollary}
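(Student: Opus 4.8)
The plan is to combine the preceding lemma with the diverse-subgraph criterion for radius-1 flip-width developed earlier in this section. That lemma shows that an order-$n$ interchange carrying \emph{all} optional lane--ramp edges contains an $(\Omega(n^{1/3}),2)$-diverse subgraph. Since the robber strategy described above escapes $t$ flips per move whenever the host graph contains a $(2^{t+1}+2,2)$-diverse subgraph, a class that contains $(\Delta,2)$-diverse subgraphs of arbitrarily large diversity $\Delta$ has unbounded radius-1 flip-width. So it suffices to exhibit, within each listed class, interchanges of unbounded order in which every optional lane--ramp edge is present.

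This all-edges condition is exactly what the constructions of the earlier theorems supply for the classes named here, and the first step is to revisit each one and confirm it. The interval construction of \cref{thm:interval} places each ramp's long interval across the entire span of lanes it serves, so its intersection graph (and hence the derived interval containment, permutation, and circle graphs) realizes every lane--ramp edge. The unit-square, unit-disk, visibility-polygon (\cref{thm:polyVis}), and $\beta$-skeleton constructions are each already stated to include all optional lane--ramp edges, a claim I would verify geometrically: that a ramp square meeting the diagonal over $[x,y]$ intersects every lane square whose short interval lies in that range; that after scaling, tangent unit disks realize each intended adjacency; that inside a ramp's triangle the ramp sees every lane between its two endpoints; and that each in-range lune is empty while every out-of-range lune is blocked by a planted point.

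With the all-edges property confirmed, I would apply the preceding lemma to these interchanges to obtain $(\Omega(n^{1/3}),2)$-diverse subgraphs, let $n\to\infty$ so the diversity tends to infinity, and invoke the robber strategy to conclude that each of the seven classes has unbounded radius-1 flip-width.

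The main difficulty is not a single calculation but the bookkeeping across classes: precisely those constructions producing all optional lane--ramp edges qualify. I would take care to explain why the unit-distance-graph and axis-aligned-segment constructions are \emph{omitted} from this corollary, namely that their realizations deliberately lack the optional lane--ramp edges, so the two-colored lane/ramp subgraph they induce need not be diverse and the lemma does not apply to them.
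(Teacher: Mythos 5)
Your core argument is correct and is essentially the paper's own (implicit) proof: the seven listed classes are exactly those whose interchange constructions realize \emph{all} optional lane--ramp edges, so the preceding lemma supplies $(\Omega(n^{1/3}),2)$-diverse subgraphs of unbounded diversity, and the $(\Delta,\chi)$-diverse robber strategy with $\chi=2$ (escape $t$ flips per move inside a $(2^{t+1}+2,2)$-diverse subgraph) then gives unbounded radius-1 flip-width.

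One part of your closing remark, however, is wrong. You are right about unit distance graphs: that construction is explicitly stated to produce \emph{no} optional lane--ramp edges, which is precisely why the paper defers unit distance graphs to the following corollary, handling them via the $(2d-2)$-diversity of hypercube graphs instead. But your explanation for the omission of axis-aligned segment intersection graphs does not hold up. In the segment construction of \cref{thm:interval}, each ramp is a horizontal segment spanning from its first lane to its last, and each lane is a tall vertical segment crossing every ramp height; consequently every lane lying between a ramp's two endpoint lanes intersects that ramp, and the construction realizes all optional lane--ramp edges. The lemma would therefore apply to segment graphs just as well as to the seven listed classes; their absence from this corollary is not forced by any failure of the all-edges hypothesis, so you should not invent that justification.
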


\begin{observation}
The graph of a $d$-dimensional hypercube is $(2d-2)$-diverse.
\end{observation}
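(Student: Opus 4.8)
The plan is to use the symmetry of the cube to collapse the statement to a single one-variable computation. Let $Q_d$ denote the graph on $\{0,1\}^d$ whose edges join vertices differing in exactly one coordinate; then $Q_d$ is $d$-regular and distance-transitive, so for any two distinct vertices $u,v$ the size of the symmetric difference of their neighbourhoods depends only on their Hamming distance $k$. It therefore suffices to evaluate one representative pair for each $k\in\{1,\dots,d\}$ and take the minimum.

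The heart of the argument is counting the common neighbours of $u$ and $v$. Any common neighbour lies at distance $1$ from each of them, which forces $k\le 2$. Hence $u$ and $v$ have no common neighbour when $k\ge 3$; none when $k=1$ (the cube is triangle-free); and exactly two when $k=2$, namely the vertices obtained by flipping one or the other of the two differing coordinates. I would then feed these counts, together with $\deg u=\deg v=d$ and the fact that $u$ and $v$ lie in each other's neighbourhood exactly when $k=1$, into inclusion--exclusion for the symmetric difference $N[u]\triangle N[v]$ of the two closed neighbourhoods. This gives $2d+2$ when $k\ge 3$ and exactly $2d-2$ when $k\in\{1,2\}$.

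Taking the minimum over all pairs yields $2d-2$, attained precisely at Hamming distances $1$ and $2$; this is the claimed diversity.

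The one place I would be careful is the distance-$2$ case: there the two shared common neighbours are exactly what reduce the difference from its generic value, and checking that the reduction stops at $2d-2$ (rather than dropping further, as it would for open neighbourhoods at $k=2$) is what pins down the constant. Everything else is routine bookkeeping licensed by distance-transitivity, so no further obstacle remains once the common-neighbour counts are verified.
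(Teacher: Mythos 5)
Your counting is right, and at its core it is the same argument as the paper's one-line proof: the hypercube is $d$-regular, two vertices have exactly two common neighbours at Hamming distance $2$ and none at any other distance, and the bound follows by inclusion--exclusion. The appeal to distance-transitivity and the explicit three-case analysis are just a more detailed write-up of that same count, so approach-wise you and the paper coincide.

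The one substantive issue is the one you flagged yourself: you compute with closed neighbourhoods $N[u]$, and the constant $2d-2$ is correct only for closed neighbourhoods. The paper defines a $\Delta$-diverse subgraph in terms of \emph{open} neighbourhoods, and under that definition the hypercube is only $(2d-4)$-diverse: for $u,v$ at Hamming distance $2$ the two common neighbours give $|N(u)\triangle N(v)| = 2d-4 < 2d-2$, so the claim with constant $2d-2$ is, read literally against the paper's stated definition, false for $d\ge 2$. Your proof establishes the closed-neighbourhood variant instead. You are in good company here---the paper's own proof (``$d$-regular, and the neighborhoods of any two vertices can share at most two neighbors'') also yields $2d-2$ only if the neighbourhoods are taken to be closed, of size $d+1$ each; with open neighbourhoods it gives $2d-4$. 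To be airtight you should either switch to open neighbourhoods and prove the constant $2d-4$, or state explicitly that you are proving the closed-neighbourhood statement. Either way the off-by-two discrepancy is harmless downstream: the corollary that unit distance graphs, $4$-polytopes, and rectangle of influence graphs have unbounded radius-$1$ flip-width only needs diversity that grows unboundedly with $d$.
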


\begin{proof}
It is $d$-regular, and the neighborhoods of any two vertices can share at most two neighbors.
\end{proof}

\begin{corollary}
Unit distance graphs, graphs of 4-polytopes, and rectangle of influence graphs have unbounded radius-1 flip-width.
\end{corollary}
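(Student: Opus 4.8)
The plan is to realize, in each of the three classes, induced copies of the hypercube graph $Q_d$ for arbitrarily large $d$, and then to quote the machinery already in place. By the preceding observation $Q_d$ is $(2d-2)$-diverse, and since the diversity of an induced subgraph depends only on the adjacencies within it, any graph containing $Q_d$ as an induced subgraph has a $(2d-2)$-diverse subgraph. Letting $d\to\infty$ then produces $\Delta$-diverse subgraphs for arbitrarily large $\Delta$, so each such class has unbounded radius-$1$ flip-width by Toruńczyk's criterion~\cite{Tor-23}. Only the way the induced hypercube is exhibited differs between the three cases.

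For rectangle of influence graphs nothing new is needed: \cref{thm:rectangle} already builds, for every $d$, a planar point set whose rectangle of influence graph induces a $d$-dimensional hypercube, so we simply reuse that construction.

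For unit distance graphs I would place the hypercube in the plane by subset sums. Choose $d$ unit vectors $u_1,\dots,u_d$ with generic directions and map each subset $S\subseteq\{1,\dots,d\}$ to $p_S=\sum_{i\in S}u_i$. Subsets at Hamming distance one land at distance exactly $1$, so every edge of $Q_d$ is a unit-distance pair. To make the unit-distance graph \emph{equal} to $Q_d$ we must rule out accidental coincidences. For distinct $S,T$ the vector $p_T-p_S$ is a signed sum of the $u_i$ over $S\triangle T$, and each of the events $p_S=p_T$ and (when $|S\triangle T|\ge 2$) $\|p_S-p_T\|=1$ holds only for a measure-zero set of the $d$ direction angles. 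As there are finitely many pairs $S,T$, a generic choice of angles avoids every coincidence, and the resulting unit-distance graph is exactly $Q_d$.

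The substantive case is the graphs of $4$-polytopes, and this is where I expect the real difficulty to lie. The interchange we used earlier for $4$-polytopes---the barycentric subdivision of a neighborly polytope---is of no help here: its ramps have degree two, so the subdivided complete graph is only $2$-diverse no matter how many lanes it has. What we need instead is the entire graph of a high-dimensional cube realized in fixed dimension four, and this is supplied by the \emph{neighborly cubical polytopes} of Joswig and Ziegler, who construct for every $n$ a cubical $4$-polytope whose $1$-skeleton is isomorphic to the graph of the $n$-cube. Granting their theorem, the graph of such a polytope is $Q_n$ itself, which is $(2n-2)$-diverse, and the conclusion follows as above. The crux is thus entirely geometric: one must accept that $Q_n$, with all $\Theta(n\,2^{n})$ of its edges and all of its non-edges, can appear as the graph of a polytope of dimension four, and the content of the argument is precisely that the neighborly cubical construction yields exactly this graph rather than a proper supergraph.
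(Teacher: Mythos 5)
Your proposal is correct and follows essentially the same route as the paper's proof: the paper likewise invokes \cref{thm:rectangle} for rectangle of influence graphs, realizes $Q_d$ as a unit distance graph via a generic linear projection of the standard basis vectors to unit vectors in the plane (citing Brass), and uses the neighborly cubical 4-polytopes of Joswig and Ziegler, all combined with the $(2d-2)$-diversity observation. Your only additions are spelling out the measure-zero genericity argument that the paper delegates to the citation, and the (apt) remark that the barycentric-subdivision interchange cannot work here because its ramps have degree two.
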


\begin{proof}
The $d$-dimensional hypercube graphs can be drawn as unit distance graphs by linear projections that map the basis vectors of $\mathbb{R}^d$ to generic unit vectors in the plane~\cite{Bra-CGTA-96}. They are the graphs of the 4-dimensional \emph{neighborly cubical polytopes}~\cite{JosZie-DCG-00}. For rectangle of influence graphs, the result follows from \cref{thm:rectangle}
\end{proof}

We do not know whether three-dimensional Delaunay triangulations have bounded radius-1 flip-width; we leave this as open for future research.

\bibliographystyle{plainurl}
\bibliography{flipwidth}

\newpage\appendix
\section{Monadic dependence}
\label{sec:monadicNIP}

In this appendix we show that graph classes that include arbitrarily large interchanges are monadically independent. This in particular implies that they do not have bounded flip-width, but unlike \cref{thm:radius-2} it does not directly bound the radius $r$ at which the robber can win.

Here, \emph{monadic dependence} is a general notion for logical relations of arbitrary arity based on the concept of a \emph{transduction}, a method of representing one logical structure by a system of first-order formulas over a finite system of monadic predicates, or equivalently over a finite coloring, of a second structure~\cite{BraLas-TAMS-21}. When applied to the first-order theory of graphs, a transduction is a mapping from vertex-colored graphs to graphs, defined by logical predicates that describe which pairs of vertices are adjacent in the image graph, and which vertices of the starting graph correspond to vertices in the image graph~\cite{NesOssSie-CSL-22}. A family of graphs $\mathcal{F}$ is monadically dependent (also written as ``monadically NIP'') if every transduction on the graphs in $\mathcal{F}$ is incomplete: there is some target graph that it does not produce, regardless of which graph in~$\mathcal{F}$ and which coloring of that graph it is applied to. On the other hand, $\mathcal{F}$ is monadically independent (``not monadically NIP'') if it is possible to find a ``universal'' transduction, one that maps colorings of graphs in $\mathcal{F}$ to all possible graphs.

It is conjectured that, among hereditary graph families, the monadically dependent families are exactly the ones for which first-order model checking is fixed-parameter tractable~\cite{ALS-16,GajHliObd-TCL-20}. For hereditary classes of ordered graphs, and logical properties that can make use of this ordering, being monadically dependent is equivalent to having bounded twin-width~\cite{BonGiodeM-STOC-22}. Without the assumption of an ordering, bounded flip-width implies monadically dependence, and a weakening of bounded flip-width, \emph{almost bounded flip-width}, is conjectured to be equivalent to monadic dependence~\cite{Tor-23}.


In this section we use a slightly stronger definition of interchanges that includes ramps between all pairs of lanes, instead of omitting the outermost lanes. An interchange under this stronger definition satisfies the weaker definition (where fewer ramps are required). In the other direction, an interchange for the weaker definition can be converted to one for the stronger definition, with two fewer lanes, by omitting the two outermost lanes.

We first observe that every huge interchange contains, as an induced subgraph,
a large interchange in which either all optional ramp-lane edges are present,
or a large interchange in which no optional ramp-lane edges are present. Call those two types of interchanges \emph{dense interchanges} and \emph{sparse interchanges}, respectively.

\begin{lemma}
\label{lem:pure}
If a class of graphs contains arbitrarily large interchanges, it contains either arbitrarily large dense interchanges or arbitrarily large sparse interchanges.
\end{lemma}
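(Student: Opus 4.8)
The plan is to apply the hypergraph Ramsey theorem to the lanes of a large interchange. Working with the strong (all-pairs) definition of interchange in force in this section, I would set up a two-colouring of the $3$-element subsets of the lane set as follows. Given three lanes $a<b<c$ in the linear sequence, colour the triple \emph{present} if the ramp joining $a$ and $c$ is adjacent to the intermediate lane $b$, and \emph{absent} otherwise. This is well defined because the global linear order singles out $b$ as the middle element, and $b$ lies strictly between the endpoints $a,c$, so the $bc$-adjacency is exactly an optional ramp--lane edge of the ramp $ac$.

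Next I would record what a monochromatic set gives us. Fix a target order $m$, and choose an interchange whose order $n$ is at least the $3$-uniform two-colour Ramsey number $R^{(3)}(m,m)$; such an interchange exists since the class has arbitrarily large interchanges. Ramsey's theorem then yields a set $L'$ of $m$ lanes all of whose triples receive a single colour. Consider the subgraph induced on $L'$ together with the ramps joining pairs of lanes in $L'$. For any two lanes $x<y$ in $L'$ and any intermediate lane $z\in L'$ with $x<z<y$, the edge from ramp $xy$ to $z$ is governed by the colour of $\{x,z,y\}$; since all triples of $L'$ share one colour, ramp $xy$ is adjacent to \emph{all} such intermediate lanes of $L'$ when the colour is \emph{present}, and to \emph{none} of them when the colour is \emph{absent}. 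As the original ramp $xy$ already has no edges to lanes outside $[x,y]$, the restricted structure is an honest interchange of order $m$, which is dense in the first case and sparse in the second; the optional lane--lane and ramp--ramp edges play no role and are inherited as they are.

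Finally I would pass from the per-$m$ conclusion to the stated dichotomy by pigeonhole. Running the construction for every $m$ produces, for each $m$, a dense or a sparse interchange of order at least $m$. Partition the indices according to which of the two types occurs; at least one part is infinite, and that part witnesses either arbitrarily large dense interchanges or arbitrarily large sparse interchanges.

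I expect the argument to be essentially routine once the colouring is identified. The one point needing care is the verification in the second step that a monochromatic set forces \emph{all} optional ramp--lane edges simultaneously or \emph{none} of them, so that we obtain a genuine dense or sparse sub-interchange rather than one with partially present optional edges. Reliance on the strong all-pairs definition of interchange is what keeps this clean, since then every pair of lanes of $L'$ still carries its ramp after restriction, and no ramp can acquire a forbidden out-of-range adjacency.
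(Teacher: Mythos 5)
Your proof is correct and takes essentially the same route as the paper: the identical $2$-coloring of triples of lanes (by whether the ramp of the outer pair is adjacent to the middle lane), the same application of the $3$-uniform Ramsey theorem, and the same extraction of a monochromatic lane set whose induced structure is a dense or sparse sub-interchange. Your explicit pigeonhole step converting the per-$m$ conclusion into the ``arbitrarily large'' dichotomy is a routine detail that the paper leaves implicit.
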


\begin{proof}
This follows from applying standard methods of Ramsey theory to a 2-colored 3-uniform hypergraph on triples of lanes, with each triple given color $1$ if an optional edge connects the middle of the three lanes to the ramp between the two outer lanes, and color $0$ if the optional edge is not present. By Ramsey's theorem, we can extract a large subset of the lanes that induces a monochromatic sub-hypergraph. The selected lanes, together with the ramps associated with pairs of selected lanes, then form a large regular interchange: if they all have color $1$, the result is a dense interchange, and if they all have color $0$, the result is a sparse interchange.
\end{proof}

The following result is from Szymon Toruńczyk (personal communication):

\begin{theorem}
A graph class that contains arbitrarily large interchanges as induced subgraphs is monadically independent.
\end{theorem}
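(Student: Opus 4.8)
The plan is to exhibit, in each of the two cases provided by \cref{lem:pure}, a single transduction whose image over all colorings of interchanges in the class is the class of all graphs; by the definition above this witnesses monadic independence. Since transductions are closed under passing to induced subgraphs (restrict the output universe to a monadic ``present'' predicate), it suffices to treat a pure dense interchange, or a pure sparse interchange, as the entire input structure, even though in $\mathcal{F}$ these appear only as induced subgraphs. In both constructions I will use colors to mark lanes and ramps separately, so that the adjacency formula can speak of \emph{lane--ramp adjacency} --- an edge with one lane-colored and one ramp-colored endpoint --- and thereby ignore the optional lane--lane and ramp--ramp edges entirely; these never enter the transduction, so the uncontrolled optional edges cause no trouble.

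\emph{Sparse case.} Here each ramp is lane-adjacent to exactly the two lanes it connects, so a sparse interchange is essentially a $1$-subdivision. To realize an arbitrary graph $H$ on $\{v_1,\dots,v_m\}$, I take a sparse interchange of order $\ge m$, color $m$ lanes $\ell_1,\dots,\ell_m$ as \emph{selected}, and color the ramp $r_{ij}$ as \emph{active} exactly when $v_iv_j\in E(H)$. The transduction keeps the selected vertices and declares $u\sim w$ iff $u,w$ are selected and some active ramp is lane-adjacent to both. Because an active ramp is lane-adjacent to precisely its two lanes, the only active ramp adjacent to two selected lanes $\ell_i,\ell_j$ is $r_{ij}$, so the image is exactly $H$.

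\emph{Dense case.} Now a ramp is lane-adjacent to the whole interval of lanes between its two endpoints, so the naive formula above would create many spurious edges and I must instead recover the lane order. Working in the stronger ``all ramps present'' convention of this appendix, for any two lanes $a,b$ the ramp whose lane-neighborhood is the closed interval between them is the unique ramp adjacent to both whose lane-neighborhood is inclusion-minimal; this lets me define a betweenness relation $B(c;a,b)$ first-orderly, and from it the predicate \emph{$x$ is an endpoint of ramp $r$} (namely, $x$ is lane-adjacent to $r$ and lies strictly between no two other lane-neighbors of $r$). To realize $H$, I again select $m$ lanes and, for each edge $v_iv_j$, activate the ramp whose interval runs exactly from $\ell_i$ to $\ell_j$; its two endpoints are then precisely $\ell_i,\ell_j$, while the selected lanes lying inside that interval are interior and are not endpoints. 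Declaring $u\sim w$ iff $u,w$ are selected and are the two endpoints of a common active ramp then yields exactly $H$.

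I expect the main obstacle to be the dense case, and specifically the first-order recovery of the lane order: since a fixed transduction uses only finitely many colors and one fixed formula, we cannot simply paint the order on, and must instead extract betweenness from the inclusion order on ramp neighborhoods. Once betweenness and hence the endpoint predicate are available, checking that the construction produces each target graph with neither spurious nor missing edges is routine. Combining the two cases, whichever alternative of \cref{lem:pure} holds supplies a universal transduction, and therefore the class is monadically independent.
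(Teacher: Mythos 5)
Your proposal is correct, and its overall architecture matches the paper's: both invoke \cref{lem:pure} to split into the sparse and dense cases, and your sparse case (mark edge-ramps as active, recover adjacency by ``common active ramp neighbor,'' restrict to the selected lanes) is essentially identical to the paper's formula $\phi$. Where you genuinely diverge is the dense case. The paper handles it by composing two transductions: it first marks the \emph{first} lane with a special color, uses it to define the linear order on lanes by a formula $\psi$ (a lane $x$ precedes $y$ iff some ramp through the first lane reaches $x$ but not $y$), then uses that order to re-derive the ramp--endpoint incidence and thereby transduce a \emph{sparse} interchange, to which the first case applies. You instead transduce arbitrary graphs from dense interchanges in one step, exploiting the fact that a ramp's lane-neighborhood is exactly the interval between its endpoints: betweenness is first-order definable because $c$ lies in $[a,b]$ iff every ramp lane-adjacent to both $a$ and $b$ is lane-adjacent to $c$ (the ramp with neighborhood exactly $[a,b]$, which exists under the appendix's stronger definition, witnesses the converse), and from betweenness you get the endpoint predicate that kills the spurious edges to interior selected lanes. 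Your route buys a symmetric construction that never needs to break symmetry with a distinguished first lane, never defines the order itself, and avoids an explicit composition of transductions; the paper's route buys modularity, since once dense interchanges transduce sparse ones, the sparse case is reused verbatim and the order formula $\psi$ is a reusable tool in its own right. Both arguments rest on the same structural fact about dense interchanges, and both correctly dispose of the optional lane--lane and ramp--ramp edges by coloring lanes and ramps so that all formulas speak only of lane--ramp adjacency.
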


\begin{proof}
By \cref{lem:pure}, we can assume either that there are arbitrarily large sparse interchanges, or that there are arbitrarily large dense interchanges. We show that in each of these cases, the class is monadically independent.

We first argue that any class containing arbitrarily large sparse interchanges is monadically independent.
We can represent an arbitrary $n$-vertex graph $G$ in a colored sparse interchange with at least $n$ lanes,
by identifying the vertices of $G$ with a subset of the lanes, and for each ramp associated 
with a pair $uv$ of vertices of $G$, marking this ramp with a special color if and only if $u$ and $v$ are adjacent in $G$. Formally, describe this marking by a monadic predicate $A(z)$ that is true of these marked ramps and false for all other vertices. Then adjacency in the original graph $G$ can be recovered from the obtained colored interchange,
by a first-order formula
\[\phi(x,y)\equiv \exists z\bigl(
x\sim z \wedge y\sim z \wedge A(z) \bigr)
\]
(where adjacency in $G$ is represented by the binary predicate $\sim$), expressing that there is a vertex $z$ that is adjacent to $x$ and $y$ and has the special marking. 
The formula $\phi$ does not depend on the selected graph $G$.
It follows that the formula $\phi$ can be used to define a transduction that
may produce an arbitrary graph $G$ from any sufficiently large sparse interchange,
by first coloring its vertices using a single special color,
then applying the formula $\phi(x,y)$ to define a new edge relation,
and finally, taking an induced subgraph to restrict only to the lanes of the interchange that correspond to the vertices of $G$.
Thus, every class transduces the class of all graphs, and is therefore monadically independent.

We now argue that every class containing arbitrarily large dense interchanges is monadically independent,
by showing that such a class transduces a class of arbitrarily large sparse interchanges.
Since transductions can be composed, this implies that the former class is monadically independent.

Given a dense interchange, color all the ramps using a special color, denoted by the monadic predicate $R(v)$,
and color the first lane (in the order on lanes) with another color, denoted by the monadic predicate $F(v)$.
We can recover the order on two lanes $x$ and $y$ using a first-order formula
\[
\begin{split}
\psi(x,y)&\equiv\\
&\bigl(F(x)\wedge x\ne y\bigr) \vee{}\\
&\exists w \exists z \bigl(
F(w) \wedge R(z) \wedge w\sim z \wedge x\sim z \wedge \lnot(y\sim z)
\bigr).\\
\end{split}
\]
Namely, a lane $x$ is smaller than a lane $y$ 
if and only if either $x$ is the first lane, or there a ramp $z$ which is adjacent to the first lane and 
to $x$ but not to $y$. Since ramps and the first lane are marked 
with a special color, this can be expressed using a fixed first-order formula
(not depending on the interchange), that can use the colors.

Now, we can identify the two lanes $x,y$ to which a ramp $z$ is associated with,
using a first-order formula.
Namely, in a sparse interchange, a ramp $z$ should be associated with exactly the smallest lane,
and the largest lane among its neighboring lanes. 
Therefore, $z$ is associated with a lane $x$ if and only if 
$z$ is adjacent to $x$ and there do not exist lanes $u$ and~$v$, smaller and larger than $x$, respectively, that are also adjacent to $z$.
This can be expressed by a first-order formula
\[
\begin{split}
\gamma(x,z)\equiv{}& \lnot R(x) \wedge R(z) \wedge x\sim z\wedge{}\\
&\lnot\exists u\exists v
\left(
\begin{split}
&\lnot R(u) \wedge \lnot R(v) \wedge{}\\
&u\sim z\wedge v\sim z \wedge{}\\
&\psi(u,x) \wedge \psi(x,v) \\
\end{split} 
\right).
\end{split}
\]
that uses the colors, and does not depend on the considered interchange.
Thus, the formula $\gamma(x,z)$ defines 
the edges of a sparse interchange, of the same size as the original dense interchange.

Hence, every class that contains arbitrarily large dense interchanges as induced subgraphs
transduces a class that contains arbitrarily large sparse interchanges, and is therefore monadically independent.
\end{proof}

This applies, in particular, to all the classes of geometric graphs for which we have constructed large interchanges.
Our proof that three-dimensional Delaunay triangulations have unbounded flip-width is an exception, as it does not use interchanges. However, in this case we have constructed weakly sparse Delaunay triangulations that are not nowhere-dense, so it follows from known results that they are monadically independent.

\end{document}